\newtheorem{thm}{Theorem}
\newtheorem{lemma}[thm]{Lemma}
\newtheorem{prop}[thm]{Proposition}
\newenvironment{numberedprop}[1]
{
  
  \begin{prop}
}{
  \end{prop}
  \addtocounter{thm}{-1}
}
\newcommand{\bra}[1]{\langle #1 |}
\newcommand{\ket}[1]{| #1 \rangle}
\newcommand{\cl}[1]{\mathcal{#1}}
\newcommand{\mnorm}[1]{%
\left\vert\kern-0.9pt\left\vert\kern-0.9pt\left\vert #1
\right\vert\kern-0.9pt\right\vert\kern-0.9pt\right\vert}
\newcommand{\bmnorm}[1]{%
\big\vert\kern-0.9pt\big\vert\kern-0.9pt\big\vert #1
\big\vert\kern-0.9pt\big\vert\kern-0.9pt\big\vert}
\newcommand{\pa}[1]{(#1)}
\newcommand{\Pa}[1]{\left(#1\right)}
\newcommand{\set}[1]{\{#1\}}
\newcommand{\Set}[1]{\left\{#1\right\}}
\newcommand{\inner}[2]{\langle #1 , #2\rangle}
\newcommand{\Inner}[2]{\left\langle #1 , #2\right\rangle}
\DeclareMathOperator{\trace}{Tr}
\newcommand{\ptr}[2]{\trace_{#1}\pa{#2}}
\newcommand{\Ptr}[2]{\trace_{#1}\Pa{#2}}
\newcommand{\tinyspace}{\mspace{1mu}}
\newcommand{\norm}[1]{\lVert\tinyspace#1\tinyspace\rVert}
\newcommand{\dnorm}[1]{\norm{#1}_{\diamond}}
\newcommand{\fontmapset}{\mathrm} 
\newcommand{\mset}[2]{\fontmapset{#1}\pa{#2}}
\newcommand{\lin}[1]{\mset{L}{#1}}
\newcommand{\zpt}{\bQ}
\newcommand{\vaa}{\bV}
\newcommand{\identity}{\mathbbm{1}}
\newcommand{\idsup}[1]{\identity_{#1}}
\newcommand{\trans}{{\scriptstyle\mathsf{T}}} 
\newcommand{\defeq}{\stackrel{\smash{\textnormal{\tiny def}}}{=}}
\DeclareMathOperator{\rank}{rank}
\DeclareMathOperator{\spn}{span}
\def\ot{\otimes}
\def\bbC{\mathbb{C}}
\def\bbM{\mathbb{M}}
\def\bbR{\mathbb{R}}
\def\cX{\mathcal{X}}
\def\cY{\mathcal{Y}}
\def\cZ{\mathcal{Z}}
\def\bQ{\mathbf{Q}}
\def\bV{\mathbf{V}}
\def\thmlowrank
\subset\lin{\cY\cX}$ of $d^2\times d^2$ Hermitian matrices of dimension
\ptr{\cY}{V}=0_\cX$.
\def\thmunitary
\subset \lin{\cY\cX}$ of $d^2\times d^2$ Hermitian matrices of dimension
\ptr{\cY}{V}=\ptr{\cX}{V}=0$.
\def\thmall
\subset\lin{\cY\cX}$ of $d^2\times d^2$ Hermitian matrices of dimension
\ptr{\cY}{V}=0_\cX$.
\def\thmunital
\subset\lin{\cY\cX}$ of $d^2\times d^2$ Hermitian matrices of dimension
\ptr{\cY}{V}=\ptr{\cX}{V}=0$.
\begin{document}

\begin{frontmatter}

\title{
Process tomography for unitary quantum channels
}

\author[PI]{Gus Gutoski}
\ead{ggutoski@perimeterinstitute.ca}
\author[IQC]{Nathaniel Johnston}
\ead{nathaniel.johnston@uwaterloo.ca}

\address[PI]{Perimeter Institute for Theoretical Physics, Waterloo, Ontario, Canada}
\address[IQC]{Institute for Quantum Computing, University of Waterloo, Waterloo, Ontario, Canada}

\renewcommand*{\today}{October 11, 2013\\{\small (Minor revisions: March 25, 2014)}}

\begin{abstract}
We study the number of measurements required for quantum process tomography under prior information, such as a promise that the unknown channel is unitary.
We introduce the notion of an interactive observable and we show that any unitary channel acting on a $d$-level quantum system can be uniquely identified among all other channels (unitary or otherwise) with only $O(d^2)$ interactive observables, as opposed to the $O(d^4)$ required for tomography of arbitrary channels.
This result generalizes, so that channels with at most $q$ Kraus operators can be identified with only $O(qd^2)$ interactive observables.
Slight improvements can be obtained if we wish to identify such a channel only among unital channels or among other channels with $q$ Kraus operators.
These results are proven via explicit construction of large subspaces of Hermitian matrices with various conditions on rank, eigenvalues, and partial trace.
Our constructions are built upon various forms of totally nonsingular matrices.
\end{abstract}

\begin{keyword}
process tomography \sep unitary channels \sep high-rank subspaces
\end{keyword}

\end{frontmatter}

\section{Introduction}
\label{sec:intro}

The problem of deducing the action of an unknown quantum channel by gathering statistics from repeated measurement is called \emph{process tomography}.
In general, one must measure a set of $O(d^4)$ distinct observables in order to uniquely identify a given channel $\Phi$ acting on a $d$-level quantum system, owing to the fact that the set of all such channels spans an affine space of dimension $d^4 - d^2$.

However, it could be that far fewer observables are required if we are promised that $\Phi$ belongs to some prescribed set $A$ of channels or if we only wish to identify $\Phi$ among some specific subset $B$ of all possible channels.
In other words, we are interested in the following question for various sets $A \subseteq B$ of channels:
\begin{quote}
  \emph{Given an unknown channel $\Phi\in A$, what is the minimum number of observables required to ensure that there is no other channel in $B$ with the same measurement statistics as $\Phi$?}
\end{quote}

\subsection{Primary results}
\label{sec:intro:results}

We show that if $\Phi$ is promised to be a \emph{unitary} channel then it suffices to measure only $O(d^2)$ observables in order to uniquely identify $\Phi$ among all channels, and that slight improvements can be obtained if we wish to identify $\Phi$ only among unital or unitary channels.
In particular,
\begin{enumerate}
\item \label{it:result:unitary} $4d^2-2d-4$ observables suffice to identify a unitary among other unitaries.
If $d=2$ then only 6 observables suffice.
\item $5d^2-3d-4$ observables suffice to identify a unitary among all channels.
\item $5d^2-4d-5$ observables suffice to identify a unitary among unital channels.
\end{enumerate}
These results generalize to the problem of identifying channels with at most $q$ Kraus operators; we show that it suffices to measure $O(qd^2)$ observables, with slight improvements if we wish to identify such a channel only among unital channels or among other channels with $q$ Kraus operators.
A detailed statement of these results is given in Theorem \ref{thm:unitary_uda_udp} of Section \ref{sec:results:tomography}.
Several potential extensions of these results are described in Section \ref{sec:future}.

\subsection{Prior work}

Our work is inspired by previous results on the number of observables required for \emph{state} tomography, the study of which was initiated by Heinosaari, Mazzarella, and Wolf \cite{HTW11} who showed that any pure state of a $d$-level quantum system can be uniquely identified among other pure states with only $O(d)$ observables, as opposed to the $O(d^2)$ required for tomography of an arbitrary state.
In particular, $4d-5$ observables suffice and this bound generalizes to the problem of identifying states with rank at most $q$ among other states with rank at most $q$.

Chen, Dawkins, Ji, Johnston, Kribs, Shultz, and Zeng investigated the number of observables required in order to identify a given pure state among \emph{all} states \cite{CDJJKSZ12}.
Those authors showed that $O(d)$ observables suffice even in this more demanding setting.
In particular, $4d-5$ observables no longer suffice but $5d-7$ observables is sufficient.
As in Ref.\ \cite{HTW11}, this bound generalizes to the problem of identifying states with rank at most $q$ among all states. This work was continued by Carmeli, Heinosaari, Schultz, and Toigo \cite{CHST13}, who slightly improved the number of observables that suffice in this setting in some cases.

At the time of this writing we were aware of prior results on the number of observables required for process tomography only under certain sparsity assumptions \cite{SKMRBAFW11}, but not under the promise that the unknown channel is unitary or has few Kraus operators.
Additional prior work was subsequently brought to our attention after the journal version of this article was published.

Most notably, Flammia, Gross, Liu, and Eisert used compressed sensing techniques to show that a channel with $q$ Kraus operators can be identified with $O(qd^2\log d)$ observables, which is already optimal to within a logarithmic factor of $d$ \cite{FGLE12}.
Their article contains a wealth of references on state tomography under prior information, and the idea for their approach to process tomography was described as early as 2009 \cite{GLFBE10}.
Moreover, their interactive observables can be implemented by preparing only eigenstates of Pauli operators and by measuring only Pauli observables, which tend to admit relatively easy implementations in the laboratory.

Along similar lines, Kimmel, da Silva, Ryan, Johnson, and Ohki showed that the unital part of any multi-qubit channel can be deduced from Clifford measurements and that their procedure is robust in the presence of measurement error \cite{KSRJO13}.

By contrast to this prior work, our primary goal is to minimize the number of observables required for tomography without regard to how those observables might be implemented.
However, in Section \ref{sec:future:experimental} we observe that the six observables required for tomography of qubit unitary channels are in fact products of Clifford operators.

\subsection{Methods and supplementary results}

The primary results of Section \ref{sec:intro:results} rest upon several supplementary results on interactive observables, on the construction of subspaces of Hermitian matrices with various conditions on rank, eigenvalues, and partial trace, and on the construction of totally nonsingular matrices with various useful properties.

\subsubsection{Interactive observables}

Before we can talk of the number of observables required for process tomography it is necessary to clarify the meaning of the word ``observable'' in this context.
In Section \ref{sec:observable} we introduce the notion of an \emph{interactive observable} for channels, a special case of which is an ordinary observable for states.
The latter specifies a measurement for a state; the former, both an input state for a channel and a measurement for the resulting output.

We claim that every $d^2\times d^2$ Hermitian matrix can be viewed as an interactive observable for channels acting on a $d$-level quantum system in the same sense that every $d\times d$ Hermitian matrix can be viewed as an ordinary observable for states of a $d$-level quantum system.

To this end we observe that an arbitrary $d^2\times d^2$ Hermitian matrix $H$ can be decomposed into an input-state--measurement pair if and only if $H$ lies in the unit ball of the dual of the diamond norm (also known as the completely bounded 1-norm or the completely bounded trace norm), from which the above claim follows via a simple scaling argument.
(See Section \ref{sec:observable:operational} for details and Section \ref{sec:future:strategies} for further generalization.)
By comparison, it is easily seen that an arbitrary $d\times d$ Hermitian matrix can be decomposed into a binary measurement if and only if it lies in the unit ball of the operator norm.

\subsubsection{Subspaces of matrices with rank, eigenvalue, and partial trace conditions}

In Section \ref{sec:reduction} we observe that questions about the number of interactive observables required for process tomography can be reduced to the problem of finding large subspaces of Hermitian matrices with some combination of restrictions on rank, eigenvalues, and partial trace.
Detailed statements of our primary result on process tomography and supplementary results on Hermitian subspaces can be found in Section \ref{sec:results}.
Explicit constructions of these subspaces appear in Section \ref{sec:high_rank}.

\subsubsection{Totally nonsingular matrices}

Our supplementary results on Hermitian subspaces in Section \ref{sec:high_rank} rest upon the existence of various forms of totally nonsingular matrices, proofs of which appear in Section \ref{sec:total_nonsing}.

\subsection{Notation, linear algebra, quantum formalism}

We restrict attention throughout the paper to finite-dimensional linear algebra.
The symbol $\bbM_d$ denotes the $d^2$-dimensional complex vector space of $d\times d$ matrices with complex entries, which is identified in the usual way with the space of linear operators acting on the $d$-dimensional complex Euclidean space $\bbC^d$.

It is helpful to give separate labels to the different factor spaces in tensor product spaces such as $\bbC^d\ot\bbC^d$ or $\bbM_d\ot\bbM_d$.
To this end the Calligraphic letters $\cX,\cY,\cZ$ are used to denote isomorphic copies of $\bbC^d$.
Tensor products such as $\cX\ot\cY$ are abbreviated to $\cX\cY$ so that
\[ \cX\cY = \cX\ot\cY \simeq \bbC^d\ot\bbC^d \simeq \bbC^{d^2}. \]
The symbol $\lin{\cX}$ denotes the vector space of linear operators acting on $\cX$ so that $\lin{\cX}\simeq\bbM_d$ and
\[ \lin{\cX\cY} = \lin{\cX\ot\cY} \simeq \bbM_d\ot\bbM_d \simeq \bbM_{d^2}. \]

Linear maps from matrices to matrices of the form $\Phi:\lin{\cX}\to\lin{\cY}$ are in one-to-one correspondence with elements of $\lin{\cY\cX}$ via the \emph{Choi--Jamio\l kowski isomorphism}:
\[ J(\Phi) = \sum_{i,j=1}^d \Phi(\ket{i}\bra{j}) \ot \ket{i}\bra{j}. \]
Here $\set{\ket{1},\dots,\ket{d}}$ denotes the standard orthonormal basis of $\bbC^d$ written in the ket notation.

A linear map $\Phi:\lin{\cX}\to\lin{\cY}$ is:
(a) completely positive if and only if $J(\Phi)$ is positive semidefinite,
(b) trace-preserving if and only if $\ptr{\cY}{J(\Phi)}=I_\cX$, and
(c) unital if and only if $\ptr{\cX}{J(\Phi)}=I_\cY$.
Moreover, $\Phi$ is completely positive if and only if there exist linear operators $A_1,\dots,A_q:\cX\to\cY$ with
\[ \Phi: X \mapsto \sum_{i=1}^q A_iXA_i^*. \]
Operators with this property are called \emph{Kraus operators} for $\Phi$.
The minimum number of Kraus operators required in any such decomposition of $\Phi$ is equal to the rank of the Choi matrix $J(\Phi)$.

Associated with each $d$-level quantum system is a $d$-dimensional complex Euclidean space $\cX$.
The set of possible \emph{states} of such a system is identified with the set of positive semidefinite matrices $\rho\in\lin{\cX}$ with trace equal to one.
The set of possible \emph{channels} acting on such a system is identified with the set of completely positive and trace-preserving linear maps $\Phi:\lin{\cX}\to\lin{\cY}$.
Each \emph{measurement} of such a system with a finite number of outcomes indexed by $a$ is represented by a finite set $\set{P_a}\subset\lin{\cX}$ of positive semidefinite matrices with $\sum_a P_a = I$.
The probability with which a system in state $\rho$ yields outcome $a$ when measured according to $\set{P_a}$ is given by the inner product $\ptr{}{P_a\rho}$.

\section{Interactive observables}\label{sec:observable}

The concept of an observable is familiar in the context of state tomography: every $d\times d$ Hermitian matrix $H$ specifies an \emph{observable} for a $d$-level quantum system.
If that system is in state $\rho$ then the \emph{expectation of measuring $H$} is the quantity $\ptr{}{H\rho}$.

Perhaps less widely known is the fact that every $d^2\times d^2$ Hermitian matrix $H\in\lin{\cY\cX}$ can also specify an ``observable'' for a \emph{channel} acting on a $d$-level quantum system.
Given such a channel $\Phi:\lin{\cX}\to\lin{\cY}$ the \emph{expectation of measuring $H$} is the quantity $\ptr{}{HJ(\Phi)}$---the inner product between $H$ and the Choi matrix $J(\Phi)$ associated with $\Phi$.
In the context of process tomography, observables such as $H$ shall be called \emph{interactive observables} in order to distinguish them from observables for states.

\subsection{Operational interpretation of interactive observables}
\label{sec:observable:operational}

What does it mean to measure an interactive observable $H$ for a channel $\Phi$?
To answer this question it is helpful to recall the meaning of measurement of an ordinary observable $H\in\lin{\cX}$ on a state $\rho\in\lin{\cX}$.
If $\norm{H}\leq 1$ then it is easy to construct a binary measurement $\set{P_\pm}\subset\lin{\cX}$ with $H=P_+-P_-$.
Suppose a system in state $\rho$ is measured according to $\set{P_\pm}$ and suppose that we assign the quantities  $\pm 1$ to the measurement outcomes $\pm$.
Then the expected value of this quantity is $\ptr{}{P_+\rho}-\ptr{}{P_-\rho}=\ptr{}{H\rho}$---the expectation of measuring $H$.
If $\norm{H}> 1$ then the expectation of measuring $H$ is simply $\norm{H}$ times the expectation of measuring $H/\norm{H}$, which, again, is $\ptr{}{H\rho}$.

For interactive observables on channels, if $H\in\lin{\cY\cX}$ is sufficiently small then a more elaborate construction can be used to extract from $H$ a state $\xi\in\lin{\cX\cZ}$ and a binary measurement $\set{P_\pm}\subset\lin{\cY\cZ}$ with the property that
\[ \ptr{}{HJ(\Phi)} = \Ptr{}{P_+ (\Phi\ot\idsup{\cZ})(\xi)} - \Ptr{}{P_- (\Phi\ot\idsup{\cZ})(\xi)}. \]
Thus, if we apply $\Phi$ to one of two $d$-level systems in joint state $\xi$ and measure the resulting state $(\Phi\ot\idsup{\cZ})(\xi)$ according to $\set{P_\pm}$, assigning the quantities $\pm 1$ to the measurement outcomes $\pm$ as before, then the expected value of this quantity is precisely the inner product $\ptr{}{HJ(\Phi)}$ of $H$ with the Choi matrix $J(\Phi)$---the expectation of measuring $H$.

\subsection{How to decompose an interactive observable} \label{sec:observable:decompose}

Given an interactive observable $H$, how does one compute the associated state $\xi$ and measurement $\set{P_\pm}$?
Suppose $H\in\lin{\cY\cX}$ is small enough that it can be written $H=Q_+-Q_-$ for some positive matrices $Q_\pm$ such that $Q_++Q_-=I_\cY\ot\rho$ for some state $\rho\in\lin{\cX}$.
Such a set $\set{Q_\pm}$ has been called a \emph{one-round measuring co-strategy} \cite{GutoskiW07}, a \emph{1-tester} \cite{ChiribellaD+09a}, a \emph{process POVM} \cite{Ziman08}, and an \emph{interactive measurement} \cite{JainU+09}.
Each of Refs.\ \cite{GutoskiW07,ChiribellaD+09a,Ziman08} offers a proof that any interactive measurement $\set{Q_\pm}\subset\lin{\cY\cX}$ can be decomposed into a state $\xi\in\lin{\cX\cZ}$ and a measurement $\set{P_\pm}\subset\lin{\cY\cZ}$ with
\[ \Ptr{}{Q_\pm J(\Phi)} = \Ptr{}{P_\pm (\Phi\ot\idsup{\cZ})(\xi)} \]
for all channels $\Phi:\lin{\cX}\to\lin{\cY}$.
These proofs are constructive and the construction can be implemented efficiently on a computer.

We are not aware of a succinct formula for $\xi,\set{P_\pm}$ in terms of an arbitrary interactive observable $H$.
But such a formula can be derived for the special case in which the state $\rho\in\lin{\cX}$ above is the completely mixed state $I_\cX/d$.
In this case we may take
\[ \xi=\ket{\phi^+}\bra{\phi^+} \]
where $\ket{\phi^+}=\frac{1}{\sqrt{d}}\sum_{i=1}^d \ket{i}_\cX\ket{i}_\cZ$ denotes the canonical maximally entangled pure state.
If we let $H_\pm$ denote the positive and negative parts of $H$ then the interactive measurement $\set{Q_\pm}$ for $H$ can be written
\[ Q_\pm = H_\pm + \frac{I_{\cY\cX}/d-H_+-H_-}{2} \]
and the desired measurement $\set{P_\pm}$ is given by normalization and matrix transposition: \[ P_\pm=dQ_\pm^\trans. \]

\subsection{Our interactive observables may be implemented with a maximally entangled state}

Our results on process tomography are not sensitive to scaling.
Specifically, the sets of interactive observables presented in the present paper for the purpose of identifying unitary channels can always be assumed to consist entirely of interactive observables $H$ with $\norm{H}\leq 1/d$ so that the formulas of Section \ref{sec:observable:decompose} applies for extracting $\xi,\set{P_\pm}$ from $H$.
In particular, each of our interactive observables can be implemented by applying $\Phi$ to the maximally entangled pure state $\ket{\phi^+}$---the only difference among our observables is in how the resulting state $(\Phi\ot\idsup{\cZ})(\ket{\phi^+}\bra{\phi^+})$ is measured.

In retrospect it is not surprising that a fixed maximally entangled pure input state suffices for process tomography:
every channel $\Phi$ is uniquely determined by its Choi matrix $J(\Phi)$, which is given by
\[J(\Phi)= d(\Phi\ot\idsup{\cZ})(\ket{\phi^+}\bra{\phi^+}).\]

\subsection{The dual of the diamond norm is the appropriate metric for interactive observables}

In Section \ref{sec:observable:decompose} we claimed that a small enough Hermitian matrix $H\in\lin{\cY\cX}$ can always be
decomposed into a state $\xi\in\lin{\cX\cZ}$ and measurement $\set{P_\pm}\subset\lin{\cY\cZ}$.
The curious reader might ask, ``How small is `small enough'?''
The answer is that $H$ must lie in the unit ball of the \emph{dual of the diamond norm}.

Specifically, it was shown via semidefinite programming in Ref.\ \cite{Gutoski12} that $H$ is in the unit ball of the dual of the diamond norm if and only if it can be written $H=Q_+-Q_-$ for some interactive measurement $\set{Q_\pm}$, from which the claim follows.
(An alternate proof of this fact that does not use semidefinite programming follows almost immediately from \cite[Theorem~2]{JK13}.)

What is the dual of the diamond norm of a Hermitian operator $H$?
The familiar diamond norm is traditionally defined for linear maps of the form $\Psi:\lin{\cX}\to\lin{\cY}$ so let us begin by considering the dual of the diamond norm for linear maps of this form:
\[ \dnorm{\Psi}^* \defeq \max_{\dnorm{\Phi}\leq 1} \Inner{\Phi}{\Psi}. \]
Here the inner product $\inner{\Phi}{\Psi}$ between two linear maps $\Phi,\Psi:\lin{\cX}\to\lin{\cY}$ is defined in the natural way by
\[ \inner{\Phi}{\Psi} \defeq \sum_{i,j=1}^{\dim(\cX)} \Ptr{}{\Phi(\ket{i}\bra{j})^*\Psi(\ket{i}\bra{j})} = \ptr{}{J(\Phi)^*J(\Psi)}. \]
The diamond norm and its dual can also be defined for operators $H\in\lin{\cY\cX}$ via the Choi--Jamio\l kowski isomorphism.
To this end let $\Psi_H:\lin{\cX}\to\lin{\cY}$ be the unique linear map with $J(\Psi_H)=H$.
The diamond norm and its dual of $H$ are thus given by
\begin{align*}
  \dnorm{H} &\defeq \dnorm{\Psi_H}, \\
  \dnorm{H}^* &\defeq \dnorm{\Psi_H}^*.
\end{align*}

\section{Reduction to finding large spaces} \label{sec:reduction}

Now that we have formally defined the concept of an interactive observable we may return to the question of the number of such observables required to uniquely identify a unitary or low-rank channel.
Previous works on state tomography began by reducing the problem to one of finding large subspaces of traceless Hermitian matrices of high rank \cite{HTW11} or with a prescribed number of positive eigenvalues \cite{CDJJKSZ12}.
Following this lead, we begin by reducing the problem of process tomography to one of finding large subspaces of Hermitian matrices with zero \emph{partial} trace (as opposed to zero trace) and having large rank or a prescribed number of positive eigenvalues.

Let $H_1,\dots,H_m\in\lin{\cY\cX}$ be arbitrary $d^2\times d^2$ Hermitian matrices, which we view as interactive observables for channels acting on a $d$-level quantum system as discussed in Section \ref{sec:observable}, and let $\vec{H}=(H_1,\dots,H_m)$ denote an arbitrary ordering of these observables.
For any Hermitian matrix $X\in\lin{\cY\cX}$ we write
\[ \vec{H}(X)=\Pa{\ptr{}{H_1X},\dots,\ptr{}{H_mX}}\in\bbR^m \]
so that for each channel $\Phi:\lin{\cX}\to\lin{\cY}$ the symbol $\vec{H}(J(\Phi))$ denotes an ordered vector of expectations obtained by measuring each of $H_1,\dots,H_m$ with respect to $\Phi$.
Under this notation, a set $\set{H_1,\dots,H_m}$ of interactive observables uniquely identifies channels in $A$ among channels in $B$ if and only if $\vec{H}(J(\Phi))\neq \vec{H}(J(\Psi))$ for each choice of distinct $\Phi\in A$ and $\Psi\in B$.

Let $\bV\subset\bQ\subset\lin{\cY\cX}$ be Hermitian subspaces of $\lin{\cY\cX}$ such that
\begin{enumerate}

\item $J(\Phi)-J(\Psi)\in\bQ$ for each choice of $\Phi\in A$ and $\Psi\in B$.

\item $J(\Phi)-J(\Psi)\not\in\vaa$ for any choice of distinct channels $\Phi\in A$ and $\Psi\in B$.

\end{enumerate}
A space $\bV$ with these properties called a \emph{discriminating subspace} for $A,B$.
Letting $\vaa^\perp$ denote the
orthogonal compliment of $\vaa$ within $\zpt$,
we claim that any spanning set $\set{H_1,\dots,H_m}$ of $\vaa^\perp$ uniquely identifies channels in $A$ among channels in $B$.

To see this, let $\vec{H}=(H_1,\dots,H_m)$ be any list of interactive observables such that the space $\vaa=\Pa{\spn\set{H_1,\dots,H_m}}^\perp$ is a discriminating subspace for $A,B$.
Then for any distinct channels $\Phi\in A,\Psi\in B$ we have
\begin{align*}
\vec{H}(J(\Phi))=\vec{H}(J(\Psi)) &\iff \vec{H}\Pa{J(\Phi)-J(\Psi)}=\vec{0} \\
&\iff J(\Phi)-J(\Psi)\in\vaa \\
&\implies \textrm{ either $\Phi\not\in A$ or $\Psi\not\in B$}
\end{align*}
from which we conclude that $\set{H_1,\dots,H_m}$ uniquely identifies each channel $A$ among channels in $B$.

Given spaces $\bV,\bQ$ as above, the number $m$ of interactive observables required to identify channels in $A$ among channels in $B$ is given by
\[ m = \dim(\bV^\perp) = \dim(\bQ) - \dim(\bV). \]
This number is minimized when the dimension of $\bQ$ is minimal and the dimension of $\vaa$ is maximal.

It appears that there is little flexibility in the choice of $\bQ$: in this paper we take $\bQ$ to be one of the two spaces
\begin{align}
\bQ_\mathrm{all} &\defeq \spn\Set{ J(\Phi)-J(\Psi) \mid \Phi,\Psi:\lin{\cX}\to\lin{\cY} \textrm{ are channels } } \label{eq:dim-all} \\
&= \Set{ Q \in\lin{\cY\cX} \mid \ptr{\cY}{Q} = 0_\cX } \nonumber \\
\bQ_\mathrm{unital} &\defeq \spn\Set{ J(\Phi)-J(\Psi) \mid \Phi,\Psi:\lin{\cX}\to\lin{\cY} \textrm{ are unital channels } } \label{eq:dim-unital} \\
&= \Set{ Q \in\lin{\cY\cX} \mid \ptr{\cY}{Q} = \ptr{\cX}{Q} = 0 } \nonumber
\end{align}
having
\begin{align*}
  \dim (\bQ_\mathrm{all}) &= d^4-d^2 \\
  \dim (\bQ_\mathrm{unital}) &= d^4-2d^2+1
\end{align*}
The difficulty lies in exhibiting large discriminating subspaces $\bV\subset\bQ$ for $A,B$;
the remainder of this paper is devoted to this task for various choices of sets $A,B$.

\section{Detailed statement of results} \label{sec:results}

We address the following three questions on process tomography:
\begin{quote}
  \emph{How many interactive observables are required in order to uniquely identify...
  \begin{enumerate}
  \item \label{it:question:low-rank} ...channels with at most $q$ Kraus operators among other channels with at most $q$ Kraus operators?
  \item \label{it:question:all} ...channels with at most $q$ Kraus operators among all other channels?
  \item \label{it:question:unital} ...unital channels with at most $q$ Kraus operators among all other unital channels?
  \end{enumerate}}
\end{quote}
In Section \ref{sec:reduction} we reduced questions of this type to the problem of finding large discriminating subspaces.
For each of these three questions we identify a potential class of discriminating subspaces and we assert the explicit constructibility of a subspace in that class.
(The constructions themselves are given in Section \ref{sec:high_rank}.)
Answers to the above questions are given in Section \ref{sec:results:tomography} after we have enumerated our subspace constructions in Sections \ref{sec:results:low-rank}--\ref{sec:results:unital}.

\subsection{Question \ref{it:question:low-rank}: Identification among channels with few Kraus operators}
\label{sec:results:low-rank}

The subspace $\bQ_\mathrm{all}$ of Eq.\ \eqref{eq:dim-all} contains the difference $J(\Phi)-J(\Psi)$ for every choice of channels $\Phi,\Psi:\lin{\cX}\to\lin{\cY}$.
If $\Phi,\Psi$ each have at most $q$ Kraus operators then their Choi matrices $J(\Phi),J(\Psi)$ each have rank at most $q$ and so it must be that $\rank\pa{J(\Phi)-J(\Psi)}\leq 2q$.
Thus, any subspace of $\bV\subset\bQ_\mathrm{all}$ in which every element has rank at least $2q+1$ is a discriminating subspace for question \ref{it:question:low-rank}.

\begin{prop}[Subspaces of high-rank matrices with vanishing partial trace]
\label{thm:low-rank}
  \thmlowrank
\end{prop}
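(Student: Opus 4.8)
The plan is to build $\bV_{2q}$ by first producing an \emph{extremal} subspace of Hermitian $n\times n$ matrices (with $n=d^2$) whose nonzero elements all have rank at least $2q+1$, and then intersecting it with the kernel of the partial-trace map. The target is natural from a dimension count: the locus of Hermitian matrices of rank at most $2q$ has real codimension $(n-2q)^2$ in the space of all $n\times n$ Hermitian matrices, so a subspace meeting this locus only at $0$ can have dimension at most $m^2$, where $m=d^2-2q$. Thus $(d^2-2q)^2$ is the largest dimension one could hope for before imposing $\ptr{\cY}{V}=0_\cX$, and the correction term $(d-\lfloor\cdots\rfloor)^2$ should be exactly the number of independent partial-trace constraints that survive on the extremal subspace.

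First I would construct the extremal min-rank subspace explicitly, parametrized injectively by an $m\times m$ Hermitian ``seed'' $G$, so that $\dim=m^2$ and every nonzero member has rank at least $2q+1$. The model computation is the anti-triangular one: splitting the index set into consecutive blocks of sizes $q,m,q$ and forming
\[
V = \begin{pmatrix} 0 & 0 & R \\ 0 & G & 0 \\ R^* & 0 & 0 \end{pmatrix},
\]
a Schur-complement argument (permute the blocks into the order $1,3,2$ so that $\begin{pmatrix} 0 & R \\ R^* & 0\end{pmatrix}$ occupies the top-left) shows that whenever $R$ is invertible the Schur complement of the corner is exactly $G$, whence $\rank(V)=2q+\rank(G)\ge 2q+1$. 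The difficulty is that a fixed corner makes the family only affine while a freely varying corner can collapse to rank $2$; the resolution is to tie the corner entries to the seed through a totally nonsingular matrix, so that rather than keeping one rigid nonsingular corner, \emph{every} nonzero member of the linear span is guaranteed to possess \emph{some} nonsingular submatrix of order $2q+1$. This is precisely where the results of Section~\ref{sec:total_nonsing} enter: since all minors of a totally nonsingular matrix are nonzero, no nonzero combination of the generators can annihilate every $(2q+1)$-minor, which upgrades the pointwise Schur bound to a uniform one over the whole $m^2$-dimensional subspace.

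Next I would impose $\ptr{\cY}{V}=0_\cX$. Under $\lin{\cY\cX}\simeq\bbM_d\ot\bbM_d$ the partial trace over $\cY$ sums the $d$ length-$d$ diagonal blocks indexed by the $\cX$-register, so the constraint is a system of $d^2$ real linear equations, and intersecting the subspace from the previous step with its kernel costs a number of dimensions equal to the rank of the partial-trace map restricted to that subspace. The zero pattern coming from the corners of total width $2q$ already annihilates several of the $\cX$-blocks, so only a Hermitian block of constraints of order $d-\lfloor\cdots\rfloor$ remains binding; counting how many complete blocks of size $d$ are consumed by the two corners yields the offset $\lfloor (2q+d-1)/d\rfloor$, and the single diagonal $\cX$-block (contributing one real constraint rather than a Hermitian off-diagonal pair) shifts the offset by one, producing the split between the odd-$d$ and even-$d$ formulas. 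Subtracting the resulting $(d-\lfloor\cdots\rfloor)^2$ from $(d^2-2q)^2$ gives the stated dimension.

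The main obstacle I anticipate is executing the second and third steps together from a \emph{single} totally nonsingular object. The same matrix must both force a nonsingular $(2q+1)$-minor in every nonzero element and interact with the partial-trace map so that the count of surviving constraints is exactly $(d-\lfloor\cdots\rfloor)^2$. In particular I would need to rule out the possibility that imposing $\ptr{\cY}{V}=0$ secretly creates a nonzero element of rank at most $2q$ by forcing a cancellation among the corner entries, and to verify that the surviving constraints are genuinely independent and no more numerous than claimed. Reconciling these two demands---rather than satisfying each with an incompatible construction---is the crux, and is exactly why the specific forms of totally nonsingular matrices proved in Section~\ref{sec:total_nonsing} are required.
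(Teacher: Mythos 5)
There is a genuine gap here---in fact two, and you have (to your credit) pointed at both of them yourself without closing either. First, the high-rank subspace is never actually constructed. Your corner-block family $V(G)$ is, as you note, only affine: its linear span contains $\mathrm{diag}(0,G,0)$ and hence rank-one elements, and even the family itself contains the rank-$2q$ element $V(0)$. The proposed repair---``tie the corner entries to the seed through a totally nonsingular matrix''---is not a construction, and the justification offered for it is a non sequitur: the nonvanishing of the minors of a single totally nonsingular matrix says nothing about the minors of a \emph{linear combination} of generators, since minors are polynomial rather than linear in the entries. The mechanism that actually works (in \cite[Proposition~4]{HTW11}, which the paper builds on) is different: each generator is supported on a single diagonal of the $d^2\times d^2$ matrix, carrying a column of a totally nonsingular matrix, and for any nonzero combination one looks at the upper-right-most nonzero diagonal and applies Lemma~\ref{lem:tr0_sum_cols} to conclude that this diagonal has at least $2q+1$ nonzero entries, whence a nonsingular $(2q+1)\times(2q+1)$ submatrix.

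Second, your treatment of the partial trace does not yield the stated dimension. Intersecting a high-rank subspace with $\ker(\ptr{\cY}{\cdot})$ does preserve the rank property, but the generic dimension count only gives $(d^2-2q)^2-d^2$, which is strictly worse than the claimed $(d^2-2q)^2-(d-\lfloor (2q+d-1)/d\rfloor)^2$; to do better you must compute the rank of the partial-trace map restricted to your specific subspace, and your paragraph on ``blocks consumed by the corners'' asserts the answer without deriving it (and refers to the corner construction you have not pinned down). The paper avoids this intersection argument altogether: it builds the partial-trace condition into the generators from the start, replacing the totally nonsingular columns on each diagonal that meets $k$ block main-diagonals by columns of the row-sum-zero matrices of Lemma~\ref{lem:vander_tr0}, accepts the weaker per-diagonal rank guarantee $f(r)=r-\min(\lfloor r/d\rfloor,k)$, and sums the resulting loss of $\lfloor (dk-2q)/d\rfloor$ generators per diagonal to obtain exactly the correction term $(d-\lfloor\cdot\rfloor)^2$. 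Reconciling the rank and trace demands is, as you say, the crux---but the proposal stops precisely there.
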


In the special case of question \ref{it:question:low-rank} in which $q=1$ we are asked to identify unitary channels among other unitary channels.
Because every unitary channel is also a unital channel it holds that the difference $J(\Phi)-J(\Psi)$ is contained in the smaller subspace $\bQ_\mathrm{unital}$ of Eq.\ \eqref{eq:dim-unital}.
In this case, a slight improvement can be obtained if we construct a discriminating subspace within $\bQ_\mathrm{unital}$.

\begin{prop}[Subspaces of rank-three matrices with two vanishing partial traces]
\label{thm:unitary}
  \thmunitary
\end{prop}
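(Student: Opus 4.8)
The plan is to realize $\bV_{2,\mathrm{unital}}$ as the intersection of the rank-$\geq 3$ subspace supplied by Proposition~\ref{thm:low-rank} with the extra partial-trace constraint defining $\bQ_\mathrm{unital}$ in \eqref{eq:dim-unital}. Writing $\bV_2\subset\lin{\cY\cX}$ for the subspace of Proposition~\ref{thm:low-rank} in the case $q=1$---every nonzero element of which has rank at least $3$ and satisfies $\ptr{\cY}{V}=0_\cX$---I would set
\[ \bV_{2,\mathrm{unital}} \defeq \Set{ V\in\bV_2 \mid \ptr{\cX}{V}=0 } = \bV_2\cap\bQ_\mathrm{unital}. \]
Since any subspace of $\bV_2$ inherits both the rank-$\geq 3$ property and the condition $\ptr{\cY}{V}=0$, while membership in $\bQ_\mathrm{unital}$ contributes $\ptr{\cX}{V}=0$, the rank and partial-trace requirements hold automatically; the whole content of the proposition is therefore the dimension count.

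Specializing the formula of Proposition~\ref{thm:low-rank} to $q=1$, the two parity cases coincide because $\lfloor(d+1)/d\rfloor=\lfloor d/d\rfloor=1$ for every $d\geq 2$, so that $\dim(\bV_2)=(d^2-2)^2-(d-1)^2=d^4-5d^2+2d+3$. Consider the real-linear map $T\colon\bV_2\to\her{\cY}$ given by $T(V)=\ptr{\cX}{V}$; its kernel is exactly $\bV_{2,\mathrm{unital}}$, whence $\dim(\bV_{2,\mathrm{unital}})=\dim(\bV_2)-\rank(T)$. The asserted value $d^4-6d^2+2d+5$ is thus equivalent to the single claim $\rank(T)=d^2-2$.

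I would establish $\rank(T)=d^2-2$ by matching upper and lower bounds. For the upper bound, every $V\in\bV_2$ has $\trace(\ptr{\cX}{V})=\trace(V)=\trace(\ptr{\cY}{V})=0$, so the image of $T$ already lies in the $(d^2-1)$-dimensional space of traceless Hermitian matrices on $\cY$; reaching $d^2-2$ requires producing one further linear functional on $\her{\cY}$ that annihilates $\ptr{\cX}{V}$ for all $V\in\bV_2$. I expect this extra relation to fall out of the explicit generators of $\bV_2$: the construction of Proposition~\ref{thm:low-rank} is assembled from totally nonsingular (Vandermonde-type) matrices supported on a prescribed set of entries, and summing over the $\cX$ index should confine $\ptr{\cX}{V}$ to a proper subspace, yielding exactly one sum rule beyond tracelessness. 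For the matching lower bound $\rank(T)\geq d^2-2$ I would exhibit $d^2-2$ members of $\bV_2$ whose images under $T$ are linearly independent, again read off directly from the generators. Pinning down this one extra relation---and verifying that the image of $T$ is exactly codimension one in the traceless Hermitian matrices---is the main obstacle, since it is the only step that genuinely reaches inside the Proposition~\ref{thm:low-rank} construction.

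Finally, the case $d=2$ must be handled separately, because there the general formula degenerates to $d^4-6d^2+2d+5=1$ and the intersection is too lossy to reach the claimed dimension $3$. For two qubits $\bQ_\mathrm{unital}$ is the $9$-dimensional real span of the tensors $\sigma_a\otimes\sigma_b$ with $a,b\in\{x,y,z\}$, each of which is traceless with both partial traces zero. I would choose three of these---deliberately avoiding a commuting triple such as $\{\sigma_x\otimes\sigma_x,\sigma_y\otimes\sigma_y,\sigma_z\otimes\sigma_z\}$, whose element $\sigma_x\otimes\sigma_x+\sigma_y\otimes\sigma_y$ has rank $2$---and verify directly that every nonzero Hermitian combination has rank at least $3$. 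Such a triple exists on dimension grounds: the nonzero rank-$\leq 2$ locus is a real variety of dimension $4$ in the $8$-dimensional projectivization of $\bQ_\mathrm{unital}$, so a generic plane of directions misses it, and it then remains only to fix an explicit admissible triple and check the finitely many $3\times 3$ minor conditions by hand.
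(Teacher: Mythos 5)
Your reduction of the general case to the single claim $\rank(T)=d^2-2$ is where the argument breaks down: that claim is false for the subspace $\bV_2$ produced by Proposition~\ref{thm:low-rank}, and the intersection $\bV_2\cap\bQ_\mathrm{unital}$ generically has dimension $d^4-6d^2+2d+4$, one short of the proposition. To see why, recall that the generators of $\bV_2$ are supported on single diagonals of the $d^2\times d^2$ matrix, and only $2d-1$ of those diagonals contribute to the second partial trace: the main diagonal and the $2(d-1)$ diagonals that meet the diagonal blocks at offsets $\pm 1,\dots,\pm(d-1)$. The main-diagonal generators already satisfy the first family of row-sum constraints, so their images under $T$ land in the $(d-1)$-dimensional space of traceless diagonal matrices; the offset-$t$ generators are built from \emph{unconstrained} totally nonsingular columns, so their images generically fill the remaining $2(d-t)$ real dimensions for each $t$. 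Generically all of these pieces are attained in full, giving $\rank(T)=(d-1)+2\sum_{t=1}^{d-1}(d-t)=d^2-1$: the only linear relation on $T(V)$ is tracelessness, and the ``one further sum rule'' you hope to extract from the construction does not exist.

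The missing dimension is precisely why the paper does not take an intersection but rebuilds the generators. Along the main diagonal it replaces the matrix of Lemma~\ref{lem:vander_tr0} by one from Lemma~\ref{lem:vander_both_tr0}, which satisfies \emph{both} families of row-sum constraints and whose rank profile still supports $f(d^2-2)=(d-1)^2=d^2-2d+1$ usable columns; by contrast, imposing the $d-1$ new linear conditions on the span of the $d^2-d-1$ columns of the one-constraint matrix leaves only $d^2-2d$ dimensions. (On the $2d-2$ near diagonals the rebuild and the intersection happen to give the same count, $d^2-d-2$ per diagonal, so the total discrepancy is exactly $1$.) So the black-box intersection cannot be completed as stated; one has to reach inside the Proposition~\ref{thm:low-rank} construction and use a two-constraint total-nonsingularity lemma. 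Your $d=2$ argument, on the other hand, is sound and can be made fully explicit more easily than you suggest: the triple $\sigma_x\otimes\sigma_x$, $\sigma_y\otimes\sigma_x$, $\sigma_z\otimes\sigma_x$ has both partial traces zero, and every nonzero real combination is of the form $(n_x\sigma_x+n_y\sigma_y+n_z\sigma_z)\otimes\sigma_x$, which has eigenvalues $\pm\norm{n}$ and hence rank $4$. The paper instead exhibits a different explicit triple, but the content is the same.
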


Proposition \ref{thm:unitary} serves to reduce the number of interactive measurements from $9$ to $6$ in the $d = 2$ case of identifying unitary channels among other unitaries.  Otherwise, it reduces that number only by $1$.

\subsection{Question \ref{it:question:all}: Identification among all channels}
\label{sec:results:all}

As observed in Section \ref{sec:results:low-rank}, the subspace $\bQ_\mathrm{all}$ of Eq.\ \eqref{eq:dim-all} contains the difference $J(\Phi)-J(\Psi)$ for every choice of channels $\Phi,\Psi:\lin{\cX}\to\lin{\cY}$.
If $\Phi$ has at most $q$ Kraus operators then its Choi matrix $J(\Phi)$ has rank at most $q$ and so it must be that $J(\Phi)-J(\Psi)$ has at most $q$ positive eigenvalues for any choice of channel $\Psi$.
Thus, any subspace of $\bV\subset\bQ_\mathrm{all}$ in which every element has at least $q+1$ positive eigenvalues is a discriminating subspace for question \ref{it:question:all}.

\begin{prop}[Subspaces of matrices with many positive eigenvalues and vanishing partial trace]
\label{thm:all}
  \thmall
\end{prop}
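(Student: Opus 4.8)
The plan is to exhibit $\bV_{q+}$ explicitly as the image of an injective linear map from a parameter space of the stated dimension into $\bQ_\mathrm{all}$, and to certify the positive-inertia lower bound through the totally nonsingular matrices constructed in Section~\ref{sec:total_nonsing}. The guiding observation is that the discriminating condition of Section~\ref{sec:results:all} reduces the problem to a statement about inertia: a nonzero Hermitian $V\in\lin{\cY\cX}$ fails to have $q+1$ positive eigenvalues precisely when there is a subspace $U\subseteq\cY\cX$ of dimension $d^2-q$ on which $V$ is negative semidefinite. My goal is therefore to build a subspace in which no nonzero element admits such a large nonpositive subspace, and total nonsingularity is exactly the tool that forbids the required degeneracies. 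Structurally this parallels Proposition~\ref{thm:low-rank} and the positive-eigenvalue constructions for states in \cite{CDJJKSZ12}, and I expect to reuse the same machinery.

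First I would fix a totally nonsingular matrix $N$ of the appropriate shape (as supplied by Section~\ref{sec:total_nonsing}) and use it to define a frame of vectors in general position in $\cY\cX$. The subspace $\bV_{q+}$ is then parametrized by an off-diagonal Hermitian structure of size $d^2-2q$: concretely, I would send each $(d^2-2q)\times(d^2-2q)$ Hermitian parameter matrix $W$ with vanishing diagonal to a Hermitian $V=V(W)\in\lin{\cY\cX}$ whose entries in the $N$-frame coordinates reproduce $W$, padded so that the diagonal $\cY$-blocks cancel. The leading dimension $(d^2-2q)(d^2-2q-1)$ is precisely the real dimension of this space of diagonal-free Hermitian matrices, which is why I expect the parameter space to take this form; the map $W\mapsto V(W)$ should be manifestly linear and injective, so the count for the $q\geq d$ regime follows immediately once the map is written down, and the hypothesis $d^2\geq 2q+2$ guarantees $d^2-2q\geq 2$ so the parameter space is nontrivial.

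Second I would verify the two structural constraints. The condition $\ptr{\cY}{V(W)}=0_\cX$ should hold by design, the padding in $V(W)$ being chosen so that summing the diagonal $\cY$-blocks cancels, mirroring the characterization of $\bQ_\mathrm{all}$ in Eq.~\eqref{eq:dim-all}. The inertia bound is the heart of the matter, and I would argue by contradiction: if some nonzero $V(W)$ had at most $q$ positive eigenvalues, then $V(W)$ would be negative semidefinite on some subspace $U$ of dimension $d^2-q$. The totally nonsingular frame should preclude this for every $W\neq 0$, because its nonvanishing minors prevent $V(W)$ from collapsing to nonpositive values across any large coordinate window; making this precise---extracting from the dimension surplus $(d^2-q)+(q+1)>d^2$ a test vector $x\in U$ whose compression $x^\ast V(W)x$ is governed by a nonzero minor of $N$ and is hence strictly positive---is the step I expect to be the main obstacle, and it is exactly where total nonsingularity is indispensable.

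Finally I would account for the correction term. When $q<d$ the frame directions living in the $d$-dimensional factor $\cX$ begin to overlap with the support of the partial-trace constraint, so some parameters that were free when $q\geq d$ must instead be spent to maintain $\ptr{\cY}{V(W)}=0_\cX$; the number lost should be exactly $(d-q)(d-q-1)$, which is the same diagonal-free Hermitian count carried out at the scale of the single system, namely for matrices of size $d-q$. I would confirm this by computing the rank of the partial-trace map restricted to the image in the two regimes separately, checking that the overlap is empty when $q\geq d$ and has the claimed size when $q<d$, and verifying throughout that $d^2\geq 2q+2$ keeps the construction nondegenerate.
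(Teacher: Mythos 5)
Your dimension bookkeeping is right---$(d^2-2q)(d^2-2q-1)$ does expand to $d^4-(4q+1)d^2+(4q^2+2q)$---but the proof has a genuine gap at its center: the inertia certification is never supplied, and the route you sketch for it cannot work as described. If $V(W)$ ``reproduces $W$ in the $N$-frame coordinates,'' i.e.\ is congruent to an embedding of the diagonal-free Hermitian parameter $W$, then by Sylvester's law of inertia $V(W)$ has exactly as many positive eigenvalues as $W$ does; but a diagonal-free Hermitian matrix of size $d^2-2q$ need not have $q+1$ positive eigenvalues (already $W=\ket{1}\bra{2}+\ket{2}\bra{1}$ has one positive and one negative eigenvalue, which fails for any $q\geq 1$). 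So the property you need cannot be inherited from the parameter space; it must come from how the entries are arranged inside $\lin{\cY\cX}$, which is exactly what your construction leaves unspecified. The ``negative semidefinite subspace of dimension $d^2-q$ plus a test vector governed by a nonzero minor'' plan is likewise not a proof: you would need to exhibit a vector $x$ in that subspace with $x^*V(W)x>0$, nothing in the proposal produces one, and you acknowledge this is ``the main obstacle.'' That obstacle is the theorem.

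The paper's mechanism is concrete and different in both of the places your sketch is vague. Following \cite[Lemma~2]{CDJJKSZ12}, the basis matrices are supported on upper-triangular \emph{anti}-diagonals, with columns of a totally nonsingular matrix placed along each anti-diagonal. By Lemma~\ref{lem:tr0_sum_cols}, the upper-left-most nonzero anti-diagonal of any nonzero linear combination retains at least $q+1$ nonzero entries; the $2(q+1)\times 2(q+1)$ principal submatrix indexed by their rows and columns has the block form $\left[\begin{smallmatrix}0 & A\\ A^* & B\end{smallmatrix}\right]$ with $A$ anti-triangular and nonsingular (all entries that would spoil this lie on earlier anti-diagonals and hence vanish), so it has exactly $q+1$ positive eigenvalues, and eigenvalue interlacing transfers the bound to the full matrix. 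The condition $\ptr{\cY}{V}=0_\cX$ is enforced not by ``padding the diagonal blocks'' but by subtracting $\ptr{\cY}{H_i}$ from the top-left $d\times d$ block of each basis element, and the correction term $(d-q)(d-q-1)$ arises because this zeroes out the portion of any anti-diagonal passing through that block, shortening its effective length and costing $2\min(z,\ell-q)$ basis elements per affected anti-diagonal---a loss that is nonzero precisely when $q<d$. Your alternative accounting via the rank of the partial-trace map restricted to the image is a different mechanism and would need its own justification; as stated it does not explain why the loss is tied to $q<d$ rather than to the dimension of the partial-trace constraint, which is $d^2$ regardless of $q$.
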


\subsection{Question \ref{it:question:unital}: Identification among unital channels}
\label{sec:results:unital}

The subspace $\bQ_\mathrm{unital}$ of Eq.\ \eqref{eq:dim-unital} contains the difference $J(\Phi)-J(\Psi)$ for every choice of unital channels $\Phi,\Psi:\lin{\cX}\to\lin{\cY}$.
If $\Phi$ has at most $q$ Kraus operators then its Choi matrix $J(\Phi)$ has rank at most $q$ and so it must be that $J(\Phi)-J(\Psi)$ has at most $q$ positive eigenvalues for any choice of channel $\Psi$.
Thus, any subspace of $\bV\subset\bQ_\mathrm{unital}$ in which every element has at least $q+1$ positive eigenvalues is a discriminating subspace for question \ref{it:question:unital}.

\begin{prop}[Subspaces of matrices with many positive eigenvalues and two vanishing partial traces]
\label{thm:unital}
  \thmunital
\end{prop}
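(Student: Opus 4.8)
The plan is to build $\bV_{q+,\mathrm{unital}}$ directly out of the subspace $\bV_{q+}$ of Proposition~\ref{thm:all} by cutting it down with the one extra unital constraint. Set
\[ \bV_{q+,\mathrm{unital}} \defeq \bV_{q+}\cap\bQ_\mathrm{unital} = \Set{V\in\bV_{q+} : \ptr{\cX}{V}=0_\cY}, \]
the second equality holding because $\bV_{q+}\subset\bQ_\mathrm{all}$ already guarantees $\ptr{\cY}{V}=0_\cX$ for every $V\in\bV_{q+}$. Every nonzero $V$ in this intersection still lies in $\bV_{q+}$, so it inherits the property of having at least $q+1$ positive eigenvalues, and by definition it satisfies both $\ptr{\cY}{V}=0_\cX$ and $\ptr{\cX}{V}=0_\cY$. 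Hence the eigenvalue and partial-trace requirements are automatic, and the whole burden of the proposition is the dimension lower bound.

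To estimate the dimension I would study the linear map $T\colon\bV_{q+}\to\her{\cY}$ given by $T(V)=\ptr{\cX}{V}$, whose kernel is precisely $\bV_{q+,\mathrm{unital}}$. Rank--nullity then yields
\[ \dim(\bV_{q+,\mathrm{unital}}) = \dim(\bV_{q+})-\dim\Pa{T(\bV_{q+})}, \]
so it suffices to show $\dim\pa{T(\bV_{q+})}\leq d^2-d-2$. One containment is free: every $V\in\bV_{q+}$ satisfies $\trace(\ptr{\cX}{V})=\trace(V)=\trace(\ptr{\cY}{V})=0$, so $T(\bV_{q+})$ lies in the traceless Hermitian matrices on $\cY$ and $\dim(T(\bV_{q+}))\leq d^2-1$. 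By itself this gives only the weaker estimate $\dim(\bV_{q+,\mathrm{unital}})\geq\dim(\bV_{q+})-d^2+1$; the additional $d+1$ dimensions promised by the proposition require showing that $T(\bV_{q+})$ misses a further $(d+1)$-dimensional subspace of traceless Hermitian matrices.

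Proving this sharper bound on $\dim(T(\bV_{q+}))$ is the heart of the matter, and it is here that I would appeal to the explicit construction of $\bV_{q+}$ in Section~\ref{sec:high_rank}. Writing a generic $V\in\lin{\cY\cX}$ in block form $V=\sum_{i,j}V_{ij}\ot\ketbra{i}{j}$ with $V_{ij}\in\lin{\cY}$ indexed by a basis of $\cX$, the two partial traces read $\ptr{\cY}{V}=\sum_{i,j}\trace(V_{ij})\ketbra{i}{j}$ and $T(V)=\sum_i V_{ii}$. The idea is to extract from the construction the exact block-support pattern and the internal linear relations among the diagonal blocks $V_{ii}$, and to deduce that every matrix of the form $\sum_i V_{ii}$ arising from $\bV_{q+}$ is confined to a subspace of the traceless Hermitian matrices of dimension at most $d^2-d-2$. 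The main obstacle is precisely this structural bookkeeping: one must check that no setting of the construction's free parameters can generate the missing $d+1$ directions, which is a direct computation on the diagonal blocks of the building-block matrices rather than any new argument about rank or eigenvalues.
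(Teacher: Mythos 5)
Your reduction is clean as far as it goes: defining $\bV_{q+,\mathrm{unital}}$ as $\bV_{q+}\cap\ker\pa{\ptr{\cX}{\cdot}}$ makes the eigenvalue and partial-trace requirements automatic, and rank--nullity correctly converts the dimension claim into the bound $\dim\pa{\ptr{\cX}{\bV_{q+}}}\leq d^2-d-2$. But that bound is the entire content of the proposition, and you do not prove it: you only establish the containment in the traceless Hermitian matrices (codimension $1$ rather than $d+1$) and defer the remaining $d+1$ dimensions to ``structural bookkeeping.'' Worse, the bookkeeping would likely come out against you. The basis elements of $\bV_{q+}$ carry generic entries (columns of totally nonsingular matrices) along anti-diagonals, and the internal diagonal of each off-diagonal block of the $d\times d$ block structure is crossed by many such anti-diagonals, so the block traces $\trace\pa{V_{ab}}$ are generically nonzero and independent as the basis element varies. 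There is no apparent reason for the image of $\ptr{\cX}{\cdot}$ on $\bV_{q+}$ to be anything smaller than the full $(d^2-1)$-dimensional space of traceless Hermitian matrices, in which case your intersection only has dimension $\dim(\bV_{q+})-d^2+1$. Indeed, the paper itself runs exactly this intersection-plus-dimension-count argument in Section~\ref{sec:gen_subspace} and observes that it yields strictly worse constants than the stated results.

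The paper's proof takes a different route that sidesteps this problem. Rather than intersecting, it applies to each basis element $H_i$ of $\bV_{q+}$ the linear modification that adds to the top-left entry of every block the negative of that block's trace. The result has all block traces zero (hence $\ptr{\cX}{\cdot}=0$), still satisfies $\ptr{\cY}{\cdot}=0$, and---crucially---has the same anti-diagonal support as $H_i$ except at entries sitting in the top-left corner of a block. The rank/eigenvalue argument therefore survives unchanged for every basis element except those whose defining anti-diagonal passes through such a corner, and counting those exceptional elements gives a loss of $2\sum_{k\in\mathcal{I}_q}L_{d,k}$ matrices (with a $+2$ correction when $q<d$), which is bounded by $d^2-d-2$. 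To rescue your approach you would have to either redesign $\bV_{q+}$ so that its image under $\ptr{\cX}{\cdot}$ is provably of dimension at most $d^2-d-2$, or abandon the intersection in favour of modifying the basis as the paper does.
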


Indeed, the bound on the dimension of $\bV_{q+,\mathrm{unital}}$ in Theorem \ref{thm:unital} can be improved slightly when $q\geq d$, but it is too cumbersome to write the precise quantity here: see Section \ref{sec:high_rank:unital} for details.

\subsection{The number of interactive observables required for process tomography}
\label{sec:results:tomography}

Our answers to questions \ref{it:question:low-rank}--\ref{it:question:unital} from the beginning of this section are as follows.

\begin{thm}[Process tomography with $O(d^2)$ interactive observables]
\label{thm:unitary_uda_udp}

  The following hold for channels acting on a $d$-level quantum system:
  \begin{enumerate}

  \item \label{it:unitary_uda_udp:unitary}
  $d^4 - d^2 - \dim(\bV_{2q})$ interactive observables suffice to identify all channels with at most $q$ Kraus operators among all other channels with at most $q$ Kraus operators.

  (Here $\dim(\bV_{2q})$ is given in Proposition \ref{thm:low-rank}.)

  In the special case of identifying unitaries among unitaries ($q=1$), Proposition \ref{thm:unitary} further reduces this number to $4d^2-2d-4$ for $d\geq 3$ and from 9 to 6 when $d=2$.

  \item
  $d^4 - d^2 - \dim(\bV_{q+})$ interactive observables suffice to identify all channels with at most $q$ Kraus operators among all other channels.

  (Here $\dim(\bV_{q+})$ is given in Proposition \ref{thm:all}.)

  \item
  $d^4 - 2d^2 + 1 - \dim(\bV_{q+,\mathrm{unital}})$ interactive observables suffice to identify all unital channels with at most $q$ Kraus operators among all other unital channels.

  (Here $\dim(\bV_{q+,\mathrm{unital}})$ is given in Proposition \ref{thm:unital}.)

  \end{enumerate}
\end{thm}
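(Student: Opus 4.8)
The plan is to combine the reduction of Section~\ref{sec:reduction} with the four subspace constructions of Propositions~\ref{thm:low-rank}--\ref{thm:unital}; once those are in hand, the theorem is essentially bookkeeping. Recall that Section~\ref{sec:reduction} establishes that whenever $\bV\subset\bQ$ is a discriminating subspace for a pair $(A,B)$, any spanning set of the orthogonal complement $\bV^\perp$ of $\bV$ within $\bQ$ furnishes a collection of interactive observables that uniquely identifies channels in $A$ among channels in $B$, and that the size of this collection equals $\dim(\bV^\perp)=\dim(\bQ)-\dim(\bV)$. Thus for each of the three parts it suffices to name the appropriate pair $(A,B)$, the ambient space $\bQ$, and a discriminating subspace $\bV$, and then subtract dimensions.

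For part~\ref{it:unitary_uda_udp:unitary} I would take $A=B$ to be the channels with at most $q$ Kraus operators and $\bQ=\bQ_\mathrm{all}$. As recorded in Section~\ref{sec:results:low-rank}, the Choi matrices of two such channels each have rank at most $q$, so their difference has rank at most $2q$; hence the subspace $\bV_{2q}$ of Proposition~\ref{thm:low-rank}, whose nonzero elements all have rank at least $2q+1$ and satisfy $\ptr{\cY}{V}=0_\cX$, lies inside $\bQ_\mathrm{all}$ and contains no nonzero difference $J(\Phi)-J(\Psi)$. This makes $\bV_{2q}$ discriminating, and the count is $\dim(\bQ_\mathrm{all})-\dim(\bV_{2q})=d^4-d^2-\dim(\bV_{2q})$. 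In the special case $q=1$, every unitary channel is unital, so each difference $J(\Phi)-J(\Psi)$ already lies in the smaller space $\bQ_\mathrm{unital}$; here I would instead use the rank-three subspace $\bV_{2,\mathrm{unital}}$ of Proposition~\ref{thm:unitary} and compute $\dim(\bQ_\mathrm{unital})-\dim(\bV_{2,\mathrm{unital}})$. Substituting $\dim(\bQ_\mathrm{unital})=d^4-2d^2+1$ and the two cases of Proposition~\ref{thm:unitary} gives $4d^2-2d-4$ for $d\geq 3$ and $9-3=6$ for $d=2$.

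Parts~2 and~3 are structurally identical but invoke the positive-eigenvalue constructions. The key elementary fact, already asserted in Sections~\ref{sec:results:all} and~\ref{sec:results:unital}, is that if $J(\Phi)$ is positive semidefinite of rank at most $q$ then for any channel $\Psi$ the Hermitian matrix $J(\Phi)-J(\Psi)$ has at most $q$ positive eigenvalues: writing it as $J(\Phi)+(-J(\Psi))$ with $-J(\Psi)$ negative semidefinite, the number of positive eigenvalues is subadditive over the two summands (on the intersection of the subspaces where each summand is nonpositive, of codimension at most $q+0$, the whole matrix is nonpositive). For part~2 I would take $A$ to be the channels with at most $q$ Kraus operators, $B$ all channels, $\bQ=\bQ_\mathrm{all}$, and $\bV=\bV_{q+}$ from Proposition~\ref{thm:all}; since every nonzero element of $\bV_{q+}$ has at least $q+1$ positive eigenvalues, no nonzero difference lands in $\bV_{q+}$, giving the count $d^4-d^2-\dim(\bV_{q+})$. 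Part~3 is the same with unital channels, $\bQ=\bQ_\mathrm{unital}$, and $\bV=\bV_{q+,\mathrm{unital}}$ from Proposition~\ref{thm:unital}, yielding $d^4-2d^2+1-\dim(\bV_{q+,\mathrm{unital}})$.

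I do not anticipate a genuine obstacle: all the difficulty resides in the four propositions, and the theorem merely assembles the reduction formula $\dim(\bQ)-\dim(\bV)$ with the tabulated dimensions. The only points needing care are the two verifications that the nonzero differences $J(\Phi)-J(\Psi)$ fall outside $\bV$---namely the containment $\bV\subset\bQ$ (supplied by the partial-trace hypotheses of each proposition) together with the rank or positive-eigenvalue separation (supplied by the bounds on $\rank(J(\Phi)-J(\Psi))$ and on its positive-eigenvalue count)---and the arithmetic of the $q=1$ unitary special case, both of which are routine.
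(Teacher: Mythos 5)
Your proposal is correct and follows essentially the same route as the paper: Theorem~\ref{thm:unitary_uda_udp} is obtained there exactly by combining the reduction of Section~\ref{sec:reduction} (the count $\dim(\bQ)-\dim(\bV)$ for a discriminating subspace $\bV\subset\bQ$) with the discriminating-subspace observations of Sections~\ref{sec:results:low-rank}--\ref{sec:results:unital} and the dimensions from Propositions~\ref{thm:low-rank}--\ref{thm:unital}. Your arithmetic for the $q=1$ unitary case and your subadditivity argument for the positive-eigenvalue bound both match what the paper asserts.
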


The claims of Section \ref{sec:intro:results} are recovered from Theorem \ref{thm:unitary_uda_udp} in the special case $q=1$.

\section{Notes on total nonsingularity}
\label{sec:total_nonsing}

The remainder of the paper is devoted to the construction of the large subspaces of Hermitian matrices satisfying the various partial trace, rank, and eigenvalue restrictions described in Section~\ref{sec:results}. Our key building block in the construction of such subspaces (and indeed, one of the building blocks used in each of Refs.\ \cite{CW08,HTW11,CDJJKSZ12}) is the notion of \emph{totally nonsingular matrices}---matrices in which every square submatrix is nonsingular.

\subsection{Total nonsingularity and high-rank subspaces}
\label{sec:total_nonsing_highrank}

One of the most well-known examples of a totally non-singular matrix is a \emph{Vandermonde matrix} \cite{Fal01}
\begin{align*}
	\begin{bmatrix}
		1 & \alpha_1 & \alpha_1^2 & \cdots & \alpha_1^{d-1} \\
		1 & \alpha_2 & \alpha_2^2 & \cdots & \alpha_2^{d-1} \\
		\vdots & \vdots & \vdots & \ddots & \vdots \\
		1 & \alpha_d & \alpha_d^2 & \cdots & \alpha_d^{d-1}
	\end{bmatrix}
\end{align*}
in which $0 < \alpha_1 < \alpha_2 < \cdots < \alpha_d$. Although only a few other explicit families of totally nonsingular matrices are known, total nonsingularity is a general phenomenon in the sense that the matrices that are not totally nonsingular form a set of measure zero.

The ways in which totally nonsingular matrices are used to construct high-rank subspaces in Refs. \cite{CW08,HTW11,CDJJKSZ12} are all similar and involve placing the columns of a totally nonsingular matrix along diagonals of matrices. For example, one way to construct a real $9$-dimensional subspace $\bV \subseteq \bbM_4$ of Hermitian matrices such that every nonzero $V \in \bV$ has $\rank(V) \geq 2$ is to construct matrices with columns of Vandermonde matrices down various diagonals as follows (we use $\cdot$ to denote $0$ entries):
\begin{align*}
	D_1 & = \begin{bmatrix}
		1 & \cdot & \cdot & \cdot \\
		\cdot & 1 & \cdot & \cdot \\
		\cdot & \cdot & 1 & \cdot \\
		\cdot & \cdot & \cdot & 1
	\end{bmatrix} & D_2 & = \begin{bmatrix}
		1 & \cdot & \cdot & \cdot \\
		\cdot & 2 & \cdot & \cdot \\
		\cdot & \cdot & 3 & \cdot \\
		\cdot & \cdot & \cdot & 4
	\end{bmatrix} & D_3 & = \begin{bmatrix}
		1 & \cdot & \cdot & \cdot \\
		\cdot & 4 & \cdot & \cdot \\
		\cdot & \cdot & 9 & \cdot \\
		\cdot & \cdot & \cdot & 16
	\end{bmatrix} \\
	D_4 & = \begin{bmatrix}
		\cdot & 1 & \cdot & \cdot \\
		\cdot & \cdot & 1 & \cdot \\
		\cdot & \cdot & \cdot & 1 \\
		\cdot & \cdot & \cdot & \cdot
	\end{bmatrix} & D_5 & = \begin{bmatrix}
		\cdot & 1 & \cdot & \cdot \\
		\cdot & \cdot & 2 & \cdot \\
		\cdot & \cdot & \cdot & 3 \\
		\cdot & \cdot & \cdot & \cdot
	\end{bmatrix} & D_6 & = \begin{bmatrix}
		\cdot & \cdot & 1 & \cdot \\
		\cdot & \cdot & \cdot & 1 \\
		\cdot & \cdot & \cdot & \cdot \\
		\cdot & \cdot & \cdot & \cdot
	\end{bmatrix}.
\end{align*}
Then the $9$ Hermitian matrices $\{H_j\}$ defined by $H_j = D_j$ $(1 \leq j \leq 3)$, $H_j = D_j + D_j^*$ $(4 \leq j \leq 6)$, and $H_j = iD_{j-3} - iD_{j-3}^*$ $(7 \leq j \leq 9)$ span a real subspace in which every nonzero matrix has rank at least $2$. To see this, notice that in every real linear combination of the $H_j$'s, there is an upper-right-most diagonal that is nonzero, and that diagonal has at least $2$ nonzero entries. The corresponding $2 \times 2$ submatrix is thus nonsingular, which implies that the matrix has rank at least $2$.

In order to generalize this argument to different dimensions and ranks, we need a way of generating columns (to be placed down various diagonals of the matrices $\{H_j\}$) with the property that their linear combinations do not have ``too many'' zero entries. The following lemma, which is well-known, shows that totally nonsingular matrices serve this purpose.
\begin{lemma}\label{lem:tr0_sum_cols}
	Let $c \leq r$ be positive integers. If $V \in \bbM_d$ is such that all of its $r \times c$ submatrices have full rank $c$, then any linear combination of $c$ columns of $V$ contains at most $r - 1$ zero entries.
\end{lemma}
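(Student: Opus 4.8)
The plan is to argue by contradiction, converting a hypothetical excess of zero entries into a nontrivial kernel vector of a full-rank submatrix. First I would fix any $c$ columns of $V$, say those indexed by a set $S$ with $\abs{S}=c$, and consider a nontrivial linear combination $w=\sum_{j\in S}\lambda_j v_j$ of them, where $v_j$ denotes the $j$-th column of $V$ and the scalars $\lambda_j$ are not all zero (this nontriviality is implicit in the statement, since the trivial combination gives the zero vector with $d$ zero entries). The task is then to bound the number of zero entries of $w$.

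Suppose toward a contradiction that $w$ has at least $r$ zero entries, and let $T$ be a set of $r$ row indices on which $w$ vanishes, so $\abs{T}=r$. Let $M$ be the $r\times c$ submatrix of $V$ obtained by restricting to the rows in $T$ and the columns in $S$. Writing $\lambda=(\lambda_j)_{j\in S}$ for the coefficient vector, the restriction of $w$ to the rows of $T$ is exactly $M\lambda$, which vanishes by the choice of $T$.

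By hypothesis every $r\times c$ submatrix of $V$ has full rank $c$, so in particular $\rank(M)=c$. Since $c\leq r$, full column rank means the columns of $M$ are linearly independent, so $M$ has trivial kernel. Thus $M\lambda=0$ forces $\lambda=0$, contradicting the assumption that the $\lambda_j$ are not all zero. Hence $w$ has at most $r-1$ zero entries, as claimed.

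The only conceptual step — and the sole place the full-rank hypothesis is used — is the observation that the zero entries of a linear combination of columns localize, on any chosen set of rows, to a kernel vector of the corresponding submatrix; once this is in place, the full-column-rank assumption closes the argument at once. There is no real computational obstacle here, so the only care needed is bookkeeping: confirming that the coefficient vector is genuinely nonzero and that the selected rows and columns index a bona fide $r\times c$ submatrix to which the hypothesis applies.
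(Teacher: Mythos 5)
Your proof is correct and follows essentially the same route as the paper's: assume a linear combination of $c$ columns has $r$ zero entries, restrict to the corresponding $r \times c$ submatrix, and derive a contradiction with the full-rank hypothesis. Your version merely makes explicit the (correct) observation that the coefficient vector must be nontrivial and that $M\lambda = 0$ with $M$ of full column rank forces $\lambda = 0$.
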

\begin{proof}
	Suppose for a contradiction that some linear combination of $c$ columns of $V$ contained $r$ or more zero entries. Then the $r \times c$ submatrix of $V$ whose columns correspond to the $c$ columns in the linear combination and whose rows correspond to $r$ of the zero entries must have rank at most $c - 1$, which is a contradiction.
\end{proof}

\subsection{Variations of total nonsingularity combined with linear constraints}
\label{sec:total_nonsing_constraints}

While totally nonsingular matrices are useful for the construction of subspaces of high-rank matrices, the subspaces we wish to construct must also satisfy certain partial trace conditions. Thus we don't need totally nonsingular matrices themselves, but rather matrices such that certain subsets of their rows sum to $0$, yet their submatrices all have high rank. The following two lemmas serve this purpose---the first lemma when there is one partial trace constraint, the second when there are two.

The first lemma is intuitive enough (albeit slightly technical) that we only provide an ``intuitive'' and not terribly rigorous proof. The second lemma of this section is much less straightforward, so it is proved rigorously using algebraic geometry techniques. The same techniques could be directly adapted to rigorously prove the first lemma.
\begin{lemma}\label{lem:vander_tr0}

    For all $d,k,m \geq 1$, if we define $f(r) := r - {\rm min}(\lfloor r/d \rfloor,k)$ then there exists $V \in \bbM_{dk+m-1}$ with the following two properties:
	\begin{enumerate}

    \item \label{it:vander_tr0:zero-sum}
    For all $0 \leq j < k$, the sum of rows $jd + 1, jd + 2, \ldots, (j+1)d$ of $V$ equals $0$.

	\item \label{it:vander_tr0:full-rank}
    For all $1 \leq r < dk + m$, every $r \times f(r)$ submatrix of $V$ has full rank $f(r)$.

	\end{enumerate}

\end{lemma}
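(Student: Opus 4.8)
The plan is to treat property~\ref{it:vander_tr0:full-rank} as a \emph{generic} condition on the linear space of matrices cut out by property~\ref{it:vander_tr0:zero-sum}, rather than to chase a single explicit Vandermonde-type witness. Write $n := dk+m-1$ and let $L\subseteq\bbM_n$ denote the set of matrices whose block sums vanish, i.e.\ those satisfying property~\ref{it:vander_tr0:zero-sum}. Since these constraints are linear, $L$ is a linear subspace of $\bbM_n\simeq\bbC^{n^2}$ and is therefore irreducible as an affine variety. For each admissible triple $(r,R,C)$ with $1\le r\le n$, $R,C\subseteq\{1,\dots,n\}$, $\abs{R}=r$, and $\abs{C}=f(r)$, the set of $V\in L$ for which the submatrix $V[R,C]$ fails to have full (column) rank $f(r)$ is the common zero locus, within $L$, of the $f(r)\times f(r)$ minors of $V[R,C]$, hence a Zariski-closed subset $B_{R,C}\subseteq L$. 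There are only finitely many such triples, so it suffices to show that each $B_{R,C}$ is a \emph{proper} subvariety of $L$; a generic $V\in L$ then lies outside every $B_{R,C}$ and satisfies both properties simultaneously.

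To show $B_{R,C}\neq L$ I would construct, for each fixed $(r,R,C)$, a single matrix $V_0\in L$ with $V_0[R,C]$ of full column rank. The combinatorial input is that any $r$ rows can contain at most $\min(\lfloor r/d\rfloor,k)$ complete blocks, since each complete block uses $d$ of the chosen rows and there are only $k$ blocks in total. Consequently one may delete $r-f(r)=\min(\lfloor r/d\rfloor,k)$ rows from $R$ in total---including at least one from each of the $b\le\min(\lfloor r/d\rfloor,k)$ complete blocks, padded with arbitrary further deletions---to obtain a subset $S\subseteq R$ with $\abs{S}=f(r)$ that meets every block in at most $d-1$ rows. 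Each block therefore retains a ``slack'' row outside $S$ that can absorb the zero-sum constraint. I would then set the rows of $V_0$ indexed by $S$ so that $V_0[S,C]$ is the identity matrix, assign each slack row to be the negative of the sum of its block-mates, and set all remaining entries to zero. This $V_0$ lies in $L$ by construction and has $V_0[S,C]$ invertible, so $V_0[R,C]$, which contains $V_0[S,C]$ as a submatrix, has rank at least $f(r)$, which is full. Hence $B_{R,C}\neq L$.

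Finitely many proper closed subsets of the irreducible variety $L$ cannot cover it, so $L\setminus\bigcup_{R,C}B_{R,C}$ is a nonempty, indeed dense, Zariski-open set, and any $V$ in it satisfies properties~\ref{it:vander_tr0:zero-sum} and~\ref{it:vander_tr0:full-rank}; equivalently, choosing for each triple the particular $f(r)\times f(r)$ minor witnessed above, the product of these minors is a polynomial on $L$ that is not identically zero, and any $V$ at which it is nonzero works. The main obstacle is pinning down the combinatorics of $f$: one must verify that $f(r)=r-\min(\lfloor r/d\rfloor,k)$ is exactly the deficiency forced by the constraints, so that the deletion producing $S$ always goes through, and one must check the boundary cases where $\lfloor r/d\rfloor$ reaches $k$ or where $f(r)=0$ (in which case the rank condition is vacuous). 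The explicit ``Vandermonde columns with vanishing block sums'' picture suggested by Lemma~\ref{lem:tr0_sum_cols} provides useful intuition for why such matrices are plentiful, but it is the genericity argument above that makes the existence rigorous.
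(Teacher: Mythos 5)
Your proposal is correct and takes essentially the same route as the paper: the paper's own proof of this lemma is only an informal ``generic matrix'' sketch, but it explicitly notes that the algebraic-geometry technique used for the two-constraint analogue (Lemma~\ref{lem:vander_both_tr0}) could be adapted to prove this lemma rigorously, and your argument is precisely that adaptation. In particular, your irreducible-variety/proper-subvariety framing and your witness construction (delete one row from each complete block to leave a slack row absorbing the zero-sum constraint, then plant an $f(r)\times f(r)$ identity) mirror the paper's proof of Lemma~\ref{lem:vander_both_tr0}, and your counting argument correctly accounts for the $\min(\lfloor r/d\rfloor,k)$ in the definition of $f$.
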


\begin{proof}

Observe that property \ref{it:vander_tr0:full-rank} is satisfied by a generic matrix with property \ref{it:vander_tr0:zero-sum}.
Indeed, suppose first that $m = 1$ and the entries of $V$ are selected uniformly at random from the interval $[0,1]$ subject to the constraint that every $d$th row is the negative of the sum of the previous $d-1$ rows. If $r \geq c$ then, generically, every $r \times c$ submatrix of $V$ will have rank ${\rm min}\{r - \ell,c\}$, where $\ell$ is the number of distinct values of $j$ such that this submatrix contains each of the $d$ rows $jd + 1, jd + 2, \ldots, (j+1)d$ of $V$. Since $\ell \leq \lfloor r/d \rfloor$, the result in the $m = 1$ case follows by letting $c = r - \lfloor r/d \rfloor$. For the case of general $m$, simply append $m-1$ randomly-generated rows to the bottom of $V$.
\end{proof}

To help illustrate Lemma~\ref{lem:vander_tr0}, we note that in the $d = k = 2, m = 1$ case it says that there exist $4 \times 4$ matrices such that the sum of the first two rows equals $0$, the sum of the last two rows equals $0$, every $1 \times 1$ submatrix has rank $1$, and every $3 \times 2$ submatrix has rank $2$ (it also says that every $2 \times 1$ submatrix has rank $1$ and that every $4 \times 2$ submatrix has rank $2$, but these properties follow automatically from the $1 \times 1$ and $3 \times 2$ rank properties). Such a matrix is easily constructed simply by randomly generating the first and third rows, setting the second and fourth rows equal to their negatives, and then verifying that the rank conditions are satisfied. For example, it is easily-verified that the following matrix satisfies all of the requirements:
\begin{align*}
	\begin{bmatrix}
	 1 &  1 &  1 &  1 \\
    -1 & -1 & -1 & -1 \\
     1 &  2 &  3 &  4 \\
    -1 & -2 & -3 & -4
	\end{bmatrix}.
\end{align*}

While Lemma~\ref{lem:vander_tr0} is useful for constructing subspaces of high-rank matrices with vanishing partial trace, we also need to construct subspaces of high-rank matrices with \emph{two} vanishing partial traces.
The following lemma serves this purpose.

\begin{lemma}\label{lem:vander_both_tr0}

	For all $d \geq 1$, if we define $f(r) := {\rm min}(r - \lfloor (r-1)/(d-1) \rfloor, (d-1)^2)$ then there exists $V \in \bbM_{d^2}$ with the following three properties:
	\begin{enumerate}

	\item \label{it:vander-both-tr0:zero-sum}
    For all $0 \leq j < d$, the sum of rows $jd + 1, jd + 2, \ldots, (j+1)d$ of $V$ equals $0$.

	\item \label{it:vander-both-tr0:zero-sum2}
    For all $0 \leq j < d$, the sum of rows $j + 1, j + 1 + d, j + 1 + 2d, \ldots, j + 1 + (d-1)d$ of $V$ equals $0$.

	\item \label{it:vander-both-tr0:full-rank}
    For all $1 \leq r \leq d^2$, every $r \times f(r)$ submatrix of $V$ has full rank $f(r)$.

	\end{enumerate}

\end{lemma}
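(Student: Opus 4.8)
The plan is to treat conditions \ref{it:vander-both-tr0:zero-sum}--\ref{it:vander-both-tr0:zero-sum2} as linear constraints and condition \ref{it:vander-both-tr0:full-rank} as a generic (Zariski-open) condition, then produce $V$ by an algebraic-geometry genericity argument. First I would reformulate the zero-sum conditions: they say precisely that every \emph{column} of $V$ lies in the subspace $W\subset\bbC^{d^2}$ of vectors whose $d$ block-sums and $d$ residue-class-sums all vanish. Reading the index $jd+a$ as the matrix position $(j,a)$ identifies $\bbC^{d^2}$ with $\bbM_d$, under which $W$ is exactly the space of $d\times d$ matrices with all row sums and all column sums zero; since the $2d$ defining functionals satisfy a single dependency, $\dim W=(d-1)^2$. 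Thus the matrices obeying \ref{it:vander-both-tr0:zero-sum}--\ref{it:vander-both-tr0:zero-sum2} form an irreducible linear variety $\mathcal{L}\cong W^{\oplus d^2}$, with one free copy of $W$ per column.

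Next I would note that condition \ref{it:vander-both-tr0:full-rank} is open on $\mathcal{L}$: for each $r$, each row set $S$ with $\abs{S}=r$, and each column set $T$ with $\abs{T}=f(r)$, the requirement that $V[S,T]$ have rank $f(r)$ is the non-vanishing of some $f(r)\times f(r)$ minor. There are finitely many triples $(r,S,T)$, and a finite intersection of dense open subsets of the irreducible $\mathcal{L}$ is again dense, hence nonempty; so a single generic $V$ works as soon as each individual locus is nonempty. The lemma thus reduces to showing, for every $S$ with $\abs{S}=r$, that some $V\in\mathcal{L}$ makes $V[S,T]$ full column rank. Because the columns of $V$ vary independently over $W$, this is possible exactly when $f(r)\le\dim\pi_S(W)$, where $\pi_S\colon W\to\bbC^{S}$ restricts a doubly-centered matrix to the positions in $S$ — a bound independent of $T$.

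Everything then hinges on the estimate $\dim\pi_S(W)\ge f(r)$ for all $\abs{S}=r$. By duality, the functionals on $\bbC^{S}$ annihilating $\pi_S(W)$ are the combinations of the $2d$ constraint functionals supported on $S$; writing such a combination's value at cell $(j,a)$ as $\alpha_j+\beta_a$, support in $S$ means $\alpha_j+\beta_a=0$ at every cell \emph{outside} $S$. Reading the cells outside $S$ as the edges of a bipartite graph $G$ on the $d$ block-vertices and $d$ residue-vertices, the solution space of such $(\alpha,\beta)$ has dimension $c(G)$, the number of connected components (one free parameter per component); quotienting by the one-dimensional space of global constants, which yield the zero functional, leaves $c(G)-1$ independent annihilators, so $\dim\pi_S(W)=r+1-c(G)$. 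Hence I must bound $c(G)\le r+1-f(r)$ for every bipartite graph on $d+d$ vertices with $e=d^2-r$ edges.

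This extremal count is the heart of the matter and the step I expect to fight with. Components are maximised by packing all $e$ edges into one near-balanced complete bipartite block and leaving the rest of the vertices isolated; writing the component sizes as $(p_i,q_i)$ and combining $p_iq_i\le\lfloor(p_i+q_i)^2/4\rfloor$ with the discrete convexity of $n\mapsto\lfloor n^2/4\rfloor$ should give $e\le\lfloor(2d-c+1)^2/4\rfloor$. The trivial bound $c\ge 2d-e$ forces $\dim\pi_S(W)\le(d-1)^2$, which is precisely the source of the $(d-1)^2$ cap in $f$, so the cap case is free. The genuinely delicate part is matching the rearranged convexity bound to the floor expression $r-\lfloor(r-1)/(d-1)\rfloor$: I would run a short case analysis on $k:=c-1$, arguing that once $k$ exceeds $\lfloor(r-1)/(d-1)\rfloor$ the bound already pushes $\dim\pi_S(W)$ to the cap. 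Getting this floor bookkeeping to come out exactly — rather than off by one in the ranges where the optimal block is ``fat'' — is where I expect the main obstacle to lie, and the same concentration construction is what would show the count is essentially tight.
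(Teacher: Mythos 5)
Your framework is sound and is in fact a sharper version of what the paper does: the paper also works on the irreducible linear variety of matrices whose columns are doubly-centred $d\times d$ arrays, also reduces condition \ref{it:vander-both-tr0:full-rank} to exhibiting, for each fixed row set $S$ and column set $T$, a single matrix making the $S\times T$ submatrix full rank, and then simply asserts that conditions \ref{it:vander-both-tr0:zero-sum}--\ref{it:vander-both-tr0:zero-sum2} constrain only $r-f(r)$ of the selected rows. Your duality computation $\dim\pi_S(W)=r+1-c(G)$, with $G$ the bipartite graph of cells outside $S$, is the correct way to make that assertion precise. The genuine problem is that the extremal inequality you rightly identify as the crux, $c(G)\le r+1-f(r)$, is \emph{false}, so the step you expected to fight with cannot be won. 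Take $d=4$ and let the four cells outside $S$ form a $2\times 2$ combinatorial block, say positions $(1,1),(1,2),(2,1),(2,2)$, i.e., $S$ omits rows $1,2,5,6$ of $V$. Then $G$ is a $4$-cycle plus four isolated vertices, so $c(G)=5$, whereas $r=12$ gives $f(12)=\min(12-\lfloor 11/3\rfloor,9)=9$ and $r+1-f(r)=4$. Equivalently, the checkerboard matrix $\left(\begin{smallmatrix}1&-1\\-1&1\end{smallmatrix}\right)$ padded with zeros is a nonzero doubly-centred matrix supported off $S$, so $\dim\pi_S(W)=8<9=f(12)$, and every $12\times 9$ submatrix whose rows are indexed by $S$ has rank at most $8$ for \emph{every} $V$ obeying conditions \ref{it:vander-both-tr0:zero-sum}--\ref{it:vander-both-tr0:zero-sum2}. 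Your aside that ``the cap case is free'' is exactly where this hides: when $f(r)=(d-1)^2$ the requirement is $\ker\pi_S=0$, which fails whenever the complement of $S$ contains a cycle of the grid's row/column incidence structure.

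The conclusion is that the lemma as stated is false for every $d\ge 4$ (analogous failures occur at other values of $r$, e.g.\ a missing $3\times 3$ block when $d=5$), and the paper's own one-line justification---that the constraints touch only $\max\{\lfloor(r-1)/(d-1)\rfloor,\,r-(d-1)^2\}$ rows of the submatrix---is wrong at precisely the same point; this is a defect of the paper, not of your strategy. What survives, and what the paper actually uses in Proposition~\ref{thm:unitary}, is the single case $r=d^2-2$, $f(r)=(d-1)^2$: there the complement of $S$ has only two cells, and no nonzero doubly-centred matrix can be supported on two cells (a nonzero cell needs a partner both in its grid-row and in its grid-column), so $\ker\pi_S=0$ and your argument closes immediately. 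If you restrict the statement to $r\ge d^2-2$, or add the hypothesis that the complement graph is a forest whenever the cap in $f$ is active, your outline does yield a complete and rigorous proof; as written, no amount of care with the floor-function bookkeeping will rescue the general case.
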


\begin{proof}

As in the proof of Lemma~\ref{lem:vander_tr0}, observe that a generic matrix that satisfies properties \ref{it:vander-both-tr0:zero-sum} and \ref{it:vander-both-tr0:zero-sum2} also satisfies property \ref{it:vander-both-tr0:full-rank}. This lemma is perhaps less intuitive than Lemma~\ref{lem:vander_tr0} though, so we prove it a bit more formally.
	
We begin by noting that the set of matrices satisfying conditions \ref{it:vander-both-tr0:zero-sum} and \ref{it:vander-both-tr0:zero-sum2} of the lemma forms an irreducible real algebraic variety $\mathcal{Z} \subseteq \mathbb{R}^{2d^2}$ (the fact that it is irreducible follows from it being a subspace). Let $1 \leq i_1 < i_2 < \cdots < i_r \leq d^2$ and $1 \leq j_1 < j_2 < \cdots < j_{f(r)} \leq d^2$ be integers, and define $\mathcal{Z}_{i_1 \ldots i_r, j_1 \ldots j_{f(r)}} \subseteq \mathbb{R}^{2d^2}$ to be the real variety consisting of matrices whose submatrix formed by rows $i_1, i_2 \ldots, i_r$ and columns $j_1, j_2 \ldots, j_{f(r)}$ has rank strictly less than $f(r)$.
	
	Our goal is to show that
	\begin{align}\label{eq:real_vari}
		\mathcal{Z} \setminus \Big( \bigcup\limits_{ \atop{1\leq i_1<\cdots <i_r\leq d^2\atop 1\leq j_1<\cdots<j_{f(r)}\leq d^2}}  \mathcal{Z}_{i_1 \ldots i_r,j_1 \ldots j_{f(r)}}\Big) \neq \emptyset.
	\end{align}
	Suppose for now, for each fixed $1 \leq i_1 < i_2 < \cdots < i_r \leq d^2$ and $1 \leq j_1 < j_2 < \cdots < j_{f(r)} \leq d^2$, that $\mathcal{Z} \cap \mathcal{Z}_{i_1 \ldots i_r,j_1 \ldots j_{f(r)}}$ is a proper subset of $\mathcal{Z}$. It then follows that it in fact has zero measure in $\mathcal{Z}$ (since $\mathcal{Z}$ is irreducible). Since a finite union of measure zero sets again has measure zero, this implies that
	\begin{align*}
		\mathcal{Z} \cap \Big( \bigcup\limits_{ \atop{1\leq i_1<\cdots <i_r\leq d^2\atop 1\leq j_1<\cdots<j_{f(r)}\leq d^2}}  \mathcal{Z}_{i_1 \ldots i_r,j_1 \ldots j_{f(r)}}\Big)
	\end{align*}
	has zero measure in $\mathcal{Z}$, which implies Equation~\eqref{eq:real_vari}.
	
	It thus suffices to show that for each fixed $1 \leq i_1 < i_2 < \cdots < i_r \leq d^2$ and $1 \leq j_1 < j_2 < \cdots < j_{f(r)} \leq d^2$ there exists a matrix $\tilde{V}$ satisfying conditions \ref{it:vander-both-tr0:zero-sum} and \ref{it:vander-both-tr0:zero-sum2} of the lemma such that the submatrix of $\tilde{V}$ corresponding to rows $i_1, i_2 \ldots, i_r$ and columns $j_1, j_2 \ldots, j_{f(r)}$ has full rank $f(r)$.
(Henceforth we denote this submatrix by $\tilde{V}_{i_1\ldots i_r,j_1,\ldots,j_{f(r)}}$.)
The existence of $\tilde{V}$ is straightforward, as the conditions \ref{it:vander-both-tr0:zero-sum} and \ref{it:vander-both-tr0:zero-sum2} impose conditions only on ${\rm max}\{ \lfloor(r-1)/(d-1)\rfloor, r - (d-1)^2\} = r-f(r)$ of the rows of $\tilde{V}_{i_1\ldots i_r,j_1,\ldots,j_{f(r)}}$. Thus we are free to set some $f(r) \times f(r)$ submatrix of $\tilde{V}_{i_1\ldots i_r,j_1,\ldots,j_{f(r)}}$ to be the identity matrix (which ensures that $\tilde{V}$ has full rank $f(r)$) and then fill in the remaining entries of $\tilde{V}$ arbitrarily, subject to constraints \ref{it:vander-both-tr0:zero-sum} and \ref{it:vander-both-tr0:zero-sum2}.
\end{proof}

To help illustrate the above result, we present the following example of a matrix $V \in \bbM_{9}$ that satisfies all three conditions of Lemma~\ref{lem:vander_both_tr0} in the $d = 3$ case:
\begin{align*}
V = \begin{bmatrix}
1 & 1 & 1 & 1 & 1 & 1 & 1 & 1 & 1 \\
1 & 3 & 4 & 5 & 6 & 7 & 8 & 9 & 10 \\
-2 & -4 & -5 & -6 & -7 & -8 & -9 & -10 & -11 \\
1 & 9 & 16 & 25 & 36 & 49 & 64 & 81 & 100 \\
1 & 27 & 64 & 125 & 216 & 343 & 512 & 729 & 1000 \\
-2 & -36 & -80 & -150 & -252 & -392 & -576 & -810 & -1100 \\
-2 & -10 & -17 & -26 & -37 & -50 & -65 & -82 & -101 \\
-2 & -30 & -68 & -130 & -222 & -350 & -520 & -738 & -1010 \\
4 & 40 & 85 & 156 & 259 & 400 & 585 & 820 & 1111
\end{bmatrix}.
\end{align*}
It is straightforward to verify that: (a) the sum of rows $1,2,3$ is zero, as is the sum of rows $4,5,6$ and the sum of rows $7,8,9$; (b) the sum of rows $1,4,7$ is zero, as is the sum of rows $2,5,8$ and the sum of rows $3,6,9$; and (c) every $1 \times 1$, $2 \times 2$, $4 \times 3$, and $6 \times 4$ submatrix of $V$ is nonsingular.

By combining Lemma~\ref{lem:vander_tr0} or~\ref{lem:vander_both_tr0} with Lemma~\ref{lem:tr0_sum_cols}, we are able to construct matrices whose rows sum to zero in ways that play nicely with the partial trace and are also such that linear combinations of their columns contain large numbers of nonzero entries. We make use of such matrices in the next section to construct high-rank subspaces with zero partial trace.

\section{Construction of large subspaces of Hermitian matrices}
\label{sec:high_rank}

In this section we invoke the preliminary results of Section \ref{sec:total_nonsing} on totally nonsingular matrices in order to prove Propositions \ref{thm:low-rank}--\ref{thm:unital} of Section \ref{sec:results} on the existence of large subspaces of Hermitian matrices that satisfy various partial trace, rank, and eigenvalue restrictions.
We repeat the statements of those propositions here for readability.

Before proving these results, it will be useful to recall explicitly that elements of $\lin{\cY\cX}$ can be thought of as $d\times d$ block matrices, and the maps $\ptr{\cX}{\cdot}$ and $\ptr{\cY}{\cdot}$ have very simple formulas in terms of the elements of such block matrices:
\begin{align*}
	\Ptr{\cX}{\begin{bmatrix}
		H_{11} & H_{12} & \cdots & H_{1d} \\
		H_{21} & H_{22} & \cdots & H_{2d} \\
		\vdots & \vdots & \ddots & \vdots \\
		H_{d1} & H_{d2} & \cdots & H_{dd}
	\end{bmatrix}} & = \begin{bmatrix}
		\trace(H_{11}) & \trace(H_{12}) & \cdots & \trace(H_{1d}) \\
		\trace(H_{21}) & \trace(H_{22}) & \cdots & \trace(H_{2d}) \\
		\vdots & \vdots & \ddots & \vdots \\
		\trace(H_{d1}) & \trace(H_{d2}) & \cdots & \trace(H_{dd})
	\end{bmatrix} \quad \text{and} \\
	\Ptr{\cY}{\begin{bmatrix}
		H_{11} & H_{12} & \cdots & H_{1d} \\
		H_{21} & H_{22} & \cdots & H_{2d} \\
		\vdots & \vdots & \ddots & \vdots \\
		H_{d1} & H_{d2} & \cdots & H_{dd}
	\end{bmatrix}} & = \sum_{i=1}^d H_{ii}.
\end{align*}

Since our goal in the proofs of the following propositions is to construct a basis of Hermitian matrices that spans a space of high rank with zero partial trace, it suffices to find such a basis with either $\trace(H_{ij}) = 0$ for all $1 \leq i,j \leq d$ and all $H$ in the basis, or with $\sum_{i=1}^d H_{ii} = 0_\cX$ for all $H$ in the basis, or both, depending on which subset of the partial traces we want to vanish.

\subsection{Subspaces of high-rank matrices with vanishing partial trace}

We now present our results that allow us to construct large subspaces of Hermitian matrices with high rank and zero partial trace. Throughout the proofs of both of these lemmas, we will discuss various diagonals of block matrices, only some of which have any effect on the partial trace. For example, in the $3^2 \times 3^2$ block matrix
\begin{align*}
	\left[\begin{array}{ccc|ccc|ccc}
	9 & 8 & 7 & 6 & 5 & 4 & 3 & 2 & 1 \\
	8 & 9 & 8 & 7 & 6 & 5 & 4 & 3 & 2 \\
	7 & 8 & 9 & 8 & 7 & 6 & 5 & 4 & 3 \\ \hline
	6 & 7 & 8 & 9 & 8 & 7 & 6 & 5 & 4 \\
	5 & 6 & 7 & 8 & 9 & 8 & 7 & 6 & 5 \\
	4 & 5 & 6 & 7 & 8 & 9 & 8 & 7 & 6 \\ \hline
	3 & 4 & 5 & 6 & 7 & 8 & 9 & 8 & 7 \\
	2 & 3 & 4 & 5 & 6 & 7 & 8 & 9 & 8 \\
	1 & 2 & 3 & 4 & 5 & 6 & 7 & 8 & 9		
	\end{array}\right],
\end{align*}
only the diagonals consisting of $7$'s, $8$', and $9$'s affect $\trace_\cY$ (in general, only the main diagonal and the $2d-2$ other nearest diagonals affect $\trace_\cY$), and only the diagonals consisting of $3$'s, $6$'s, and $9$'s affect $\trace_\cX$ (in general, only the main diagonal and every $d$-th diagonal on either side of it affect $\trace_\cX$).

\begin{numberedprop}{\ref{thm:low-rank}}[Subspaces of high-rank matrices with vanishing partial trace]
  \thmlowrank
\end{numberedprop}

\begin{proof}
  We prove the statement by giving an explicit construction, which is built upon the construction given in \cite[Proposition~4]{HTW11} (i.e., the construction that was roughly illustrated in Section~\ref{sec:total_nonsing_highrank}).
  We begin by taking the basis $\{H_i\}$ of the subspace given without the partial trace condition and then impose the partial trace condition and see how much the dimensionality is reduced.
  (However, we actually set $\ptr{\cX}{H}=0_\cY$ rather than $\ptr{\cY}{H}=0_\cX$, as this results in a larger subspace---one can then just swap the spaces $\cX,\cY$ to get $\ptr{\cY}{H}=0_\cX$.)
  More specifically, we consider the $(d^2-2q)^2$ matrices $H_i$ constructed along diagonals in the proof of \cite[Proposition~4]{HTW11} and think of them as $d \times d$ block matrices.
  In order to ensure that $\ptr{\cX}{H_i} = 0$ for all $i$, if $H_i$ is defined along a diagonal that consists of the main diagonals of $k \geq 1$ of its subblocks, we place along that diagonal the columns of a matrix described by Lemma~\ref{lem:vander_tr0} (with $m = 1$) rather than the columns of a totally nonsingular matrix.

  Since we want any linear combination of the $H_i$'s to have at least $2q+1$ nonzero entries, from Lemma~\ref{lem:tr0_sum_cols} and the $m=1$ case of Lemma~\ref{lem:vander_tr0} we can construct $(dk-2q) - \lfloor (dk-2q)/d \rfloor$ such matrices along these ``problem'' diagonals, rather than the $dk-2q$ such matrices in the original proof. We thus lose $\lfloor (dk-2q)/d \rfloor$ matrices along each diagonal that goes through $k$ block diagonals, for a total of
  \begin{align*}
  	\lfloor (d^2-2q)/d \rfloor + 2\sum_{k=1}^{d-1}\lfloor (dk-2q)/d \rfloor = \begin{cases} (d-\lfloor (2q+d-1)/d \rfloor)^2 & \mbox{if } d \text{ is odd} \\
		(d-\lfloor (2q+d-2)/d \rfloor)^2 & \mbox{if } d \text{ is even} \end{cases}
  \end{align*}
  matrices. Since the original subspace without the partial trace condition had dimension $(d^2 - 2q)^2$, the subspace $\bV_{2q}$ described after imposing the partial trace condition has the dimensionality specified in the statement of the proposition.
\end{proof}

\begin{numberedprop}{\ref{thm:unitary}}[Subspaces of rank-three matrices with two vanishing partial traces]
  \thmunitary
\end{numberedprop}

\begin{proof}
  We prove the result by giving an explicit construction that arises from modifying the basis given in the proof of Proposition~\ref{thm:low-rank}. That is, we begin by taking the basis $\{H_i\}$ of the subspace given with only one partial trace restriction and then impose the other partial trace condition and see how much the dimensionality is reduced.

  As before, we think of each $H_i$ as a $d \times d$ block matrix. In order to ensure that $\ptr{\cY}{H_i} = 0_\cX$ for all $i$, if $H_i$ is defined along its main diagonal, we place along that diagonal the columns of a matrix described by Lemma~\ref{lem:vander_both_tr0} rather than the columns of a matrix described by Lemma~\ref{lem:vander_tr0}. Similarly, if $H_i$ is defined along one of the other $2d-2$ diagonals closest to the main diagonal, we place along that diagonal the columns of a matrix described by Lemma~\ref{lem:vander_tr0} (with $m > 1$ and the rows permuted accordingly so that $\ptr{\cY}{H_i} = 0_\cX$) rather than the columns of a totally nonsingular matrix.

  Since we want any linear combination of the $H_i$'s to have at least $3$ nonzero entries, from Lemmas~\ref{lem:tr0_sum_cols} and~\ref{lem:vander_both_tr0} we can construct $(d-1)^2$ such matrices along the main diagonal (i.e., we lose $d-2$ matrices), and along the other $2d-2$ diagonals nearest to the main diagonal we can construct $2(d-1)(d^2-d-2)$ matrices (i.e., we lose $d(d-1)$ matrices along these diagonal). Since the original subspace $\bV_{2}$ with just one partial trace constraint had dimension $(d^2 - 2)^2 - (d-1)^2$, the subspace $\bV_{2,\mathrm{unital}}$ described after imposing the second partial trace constraint has dimension $(d^2 - 2)^2 - (d-1)^2 - (d-2) - d(d-1) = d^4 - 6d^2 + 2d + 5$.

  The claim about the case $d = 2$ follows from observing the the following $3$ block matrices form a basis for a subspace that satisfies all of the given rank and partial trace requirements:
  \begin{align}\label{eq:d2unitaryeg}
	\left[\begin{array}{cc|cc}
	1 & 0 & 0 & 0 \\
	0 & -1 & 0 & 0 \\ \hline
	0 & 0 & -1 & 0 \\
	0 & 0 & 0 & 1
	\end{array}\right], \quad \left[\begin{array}{cc|cc}
	0 & 1 & 0 & 0 \\
	1 & 0 & 2 & 0 \\ \hline
	0 & 2 & 0 & -1 \\
	0 & 0 & -1 & 0
	\end{array}\right], \quad \left[\begin{array}{cc|cc}
	0 & i & 0 & 0 \\
	-i & 0 & 2i & 0 \\ \hline
	0 & -2i & 0 & -i \\
	0 & 0 & i & 0
	\end{array}\right].
  \end{align}
\end{proof}

\subsection{Subspaces of matrices with many positive eigenvalues and vanishing partial trace}
\label{sec:high_rank:unital}

We now present our results that allow us to construct large subspaces of Hermitian matrices with many positive eigenvalues. Unlike the proofs in the previous subsection, the matrices considered here are constructed along \emph{anti}-diagonals (i.e., diagonals that run from lower-left to upper-right) rather than standard diagonals. That is, we now follow the approach of \cite[Lemma~2]{CDJJKSZ12}, rather than that of \cite[Proposition~4]{HTW11}, but our constructions are again more complicated as a result of us requiring that the partial trace of every element of the subspace vanishes.

Note that, in the following propositions, we only consider the upper-triangular part of the anti-diagonals (i.e., the part of the anti-diagonals lying \emph{strictly} above the main diagonal), as the Hermiticity condition determines the lower-triangular entries and all entries along the main diagonal equal $0$.

\begin{numberedprop}{\ref{thm:all}}[Subspaces of matrices with many positive eigenvalues and vanishing partial trace]
  \thmall
\end{numberedprop}

\begin{proof}
  We prove the statement by giving an explicit construction of such a subspace, which is built upon the construction given in \cite[Lemma~2]{CDJJKSZ12}. In fact, our construction is exactly the same, but with just one change in order to impose the partial trace condition.

  As usual, we think of each matrix $V \in \bV_{q+}$ as a $d \times d$ block matrix. We construct a basis $\{H_i\}$ of $\bV_{q+}$ as in \cite{CDJJKSZ12} by placing columns of a totally nonsingular matrix along the upper-triangular part of anti-diagonals. In order to ensure that $\ptr{\cY}{H_i} = 0_\cX$ for all $i$, however, we subtract $\ptr{\cY}{H_i}$ from the top-left block of $H_i$. Two examples of such matrices $H_i$ in the $d = 3$ case are as follows:
	\begin{align*}
		\left[\begin{array}{ccc|ccc|ccc}
		0 & 0 & -1 & 0 & 0 & 0 & 0 & 0 & 4 \\
		0 & 0 & 0 & 0 & 0 & 0 & 0 & 3 & 0 \\
		-1 & 0 & 0 & 0 & 0 & 0 & 2 & 0 & 0 \\ \hline
		0 & 0 & 0 & 0 & 0 & 1 & 0 & 0 & 0 \\
		0 & 0 & 0 & 0 & 0 & 0 & 0 & 0 & 0 \\
		0 & 0 & 0 & 1 & 0 & 0 & 0 & 0 & 0 \\ \hline
		0 & 0 & 2 & 0 & 0 & 0 & 0 & 0 & 0 \\
		0 & 3 & 0 & 0 & 0 & 0 & 0 & 0 & 0 \\
		4 & 0 & 0 & 0 & 0 & 0 & 0 & 0 & 0		
		\end{array}\right] & & \left[\begin{array}{ccc|ccc|ccc}
		0 & 0 & 0 & 2 & 0 & 0 & 0 & 0 & 0 \\
		0 & 0 & 0 & 0 & 0 & 0 & 0 & 0 & 0 \\
		0 & 0 & 0 & 0 & 0 & 0 & 0 & 0 & 0 \\ \hline
		2 & 0 & 0 & 0 & 0 & 0 & 0 & 0 & 0 \\
		0 & 0 & 0 & 0 & 0 & 0 & 0 & 0 & 0 \\
		0 & 0 & 0 & 0 & 0 & 0 & 0 & 0 & 0 \\ \hline
		0 & 0 & 0 & 0 & 0 & 0 & 0 & 0 & 0 \\
		0 & 0 & 0 & 0 & 0 & 0 & 0 & 0 & 0 \\
		0 & 0 & 0 & 0 & 0 & 0 & 0 & 0 & 0		
		\end{array}\right]
	\end{align*}

	In the left matrix, we initially placed $[1,2,3,4]$ along one of the upper-triangular anti-diagonals, then we adjusted the other entries to enforce Hermiticity and the partial trace constraint. In the right matrix, we similarly placed $[1,2]$ along one of the upper-triangular anti-diagonals, but in this case, when we enforced the Hermiticity and partial trace constraints, the $(2,3)$- and $(3,2)$-entries of the matrix were forced to equal $0$.
	
  The rightmost example above illustrates the one and only case where forcing $\ptr{\cY}{H_i} = 0_\cX$ has any effect on the proof that these matrices are linearly independent and every matrix in this subspace has at least $q+1$ positive and $q+1$ negative eigenvalues: when the anti-diagonal that defines $H_i$ goes through the top-left $d\times d$ sub-block. In this case, all entries within that block of $H_i$ are forced to be $0$, which reduces the number of distinct matrices $H_i$ that we can construct along that anti-diagonal (for example, the matrix on the right above is \emph{not} included in the basis of our subspace when $q = 1$, since the zeros at the $(2,3)$- and $(3,2)$-entries cause it to only have $1$ positive and $1$ negative eigenvalue). There are $d(d-1)/2$ upper-triangular anti-diagonals that intersect the top-left block. However, ${\rm min}(2q,d(d-1)/2)$ of those anti-diagonals are of length $\leq q$ and do not have an associated matrix $H_i$. For each of the other anti-diagonals that intersect the top-left block, we lose $2{\rm min}(z,\ell-q)$ matrices, where $z$ is the number of entries along that upper-triangular anti-diagonal within the top-left block and $\ell$ is the length of that anti-diagonal. Thus we remove a total of $(d-q-1){\rm max}\{d-q,0\}$ matrices from the set $\{H_i\}$ in order to not violate the eigenvalue requirement. Since the dimension of the subspace constructed in this way in \cite{CDJJKSZ12} \emph{without} the partial trace restriction was $d^4 - (4q+1)d^2 + (4q^2 + 2q)$, the result follows.
\end{proof}

Throughout the proof of the following proposition, we will repeatedly refer to the quantity
\begin{align*}
  	L_{d,k} \defeq \begin{cases} \lfloor \frac{k+1}{2}\rfloor & \mbox{if } k < d \\
																 \lfloor \frac{2d-k-1}{2}\rfloor & \mbox{if } k \geq d \end{cases}.
\end{align*}
That is, $L_{d,k}$ is the length of the $k$th upper-triangular anti-diagonal of a $d \times d$ matrix, just as in \cite{CDJJKSZ12}. Note that we count the upper-triangular anti-diagonals from left to right, so that the $(1,2)$-entry is the first upper-triangular anti-diagonal, the $(1,3)$-entry is the second upper-triangular anti-diagonal, the $(1,4)$- and $(2,3)$-entries together make up the third upper-triangular anti-diagonal, and so on. For a $d \times d$ matrix, there are $2d - 3$ distinct upper-triangular anti-diagonals.

\begin{numberedprop}{\ref{thm:unital}}[Subspaces of matrices with many positive eigenvalues and two vanishing partial traces]
  \thmunital
\end{numberedprop}

Before proving this proposition, we note that the bound on $\dim(\bV_{q+,\mathrm{unital}})$ that we actually prove is slightly better than stated above when $q \geq d$. Indeed, if we define
\begin{align}\label{def:iq}
	\mathcal{I}_q \defeq \big\{ k \in \{1,2,\ldots,2d-3\} : L_{d^2,dk} \geq q+1 \big\}
\end{align}
then we show that there exists such a subspace $\bV_{q+,\mathrm{unital}}$ with $\dim(\bV_{q+,\mathrm{unital}}) = \dim(\bV_{q+}) - 2\sum_{k \in \mathcal{I}_q}L_{d,k}$ when $q \geq d$, and it is straightforward to show that $2\sum_{k \in \mathcal{I}_q}L_{d,k} \leq d^2 - d - 2$ in this case.

\begin{proof}
  Once again, we prove this result by explicitly constructing a basis of such a subspace. Our construction uses the basis $\{H_i\}$ constructed in the proof of Proposition~\ref{thm:all} as a starting point. As before, we think of these matrices as $d \times d$ block matrices. To enforce the requirement that $\ptr{\cX}{H_i} = 0_\cY$ for all $i$, we add to the top-left entry of each block of $H_i$ the negative of the trace of that block. Applying this procedure to the same two matrices considered in the $d=3$ case in the proof of Proposition~\ref{thm:all} results in the following two matrices:
	\begin{align*}
		\left[\begin{array}{ccc|ccc|ccc}
		0 & 0 & -1 & 0 & 0 & 0 & -3 & 0 & 4 \\
		0 & 0 & 0 & 0 & 0 & 0 & 0 & 3 & 0 \\
		-1 & 0 & 0 & 0 & 0 & 0 & 2 & 0 & 0 \\ \hline
		0 & 0 & 0 & 0 & 0 & 1 & 0 & 0 & 0 \\
		0 & 0 & 0 & 0 & 0 & 0 & 0 & 0 & 0 \\
		0 & 0 & 0 & 1 & 0 & 0 & 0 & 0 & 0 \\ \hline
		-3 & 0 & 2 & 0 & 0 & 0 & 0 & 0 & 0 \\
		0 & 3 & 0 & 0 & 0 & 0 & 0 & 0 & 0 \\
		4 & 0 & 0 & 0 & 0 & 0 & 0 & 0 & 0		
		\end{array}\right] & & \left[\begin{array}{ccc|ccc|ccc}
		0 & 0 & 0 & 0 & 0 & 0 & 0 & 0 & 0 \\
		0 & 0 & 0 & 0 & 0 & 0 & 0 & 0 & 0 \\
		0 & 0 & 0 & 0 & 0 & 0 & 0 & 0 & 0 \\ \hline
		0 & 0 & 0 & 0 & 0 & 0 & 0 & 0 & 0 \\
		0 & 0 & 0 & 0 & 0 & 0 & 0 & 0 & 0 \\
		0 & 0 & 0 & 0 & 0 & 0 & 0 & 0 & 0 \\ \hline
		0 & 0 & 0 & 0 & 0 & 0 & 0 & 0 & 0 \\
		0 & 0 & 0 & 0 & 0 & 0 & 0 & 0 & 0 \\
		0 & 0 & 0 & 0 & 0 & 0 & 0 & 0 & 0		
		\end{array}\right].
	\end{align*}
	
	In the left matrix, the only change is that the $(1,7)$- and $(7,1)$-entries of the matrix now equal $-3$. The right matrix, however, now consists entirely of zeroes. As before, there is only one case where this procedure affects the proof, and that is when the anti-diagonal that defines $H_i$ goes through the top-left corner of a block (as in the example on the right above). In this case, that entry of $H_i$ is forced to be $0$, which reduces the number of distinct matrices $H_i$ that we can construct along that anti-diagonal by $2$ (one real matrix and one imaginary matrix).
	
	There are $d(d-1)/2$ top-left corners of blocks in the upper-triangular portion of $H$, however we only remove a matrix from our basis if the anti-diagonal going through that top-left corner has length $\geq q+1$. In other words, we remove $2\sum_{k \in \mathcal{I}_q}L_{d,k}$ matrices, where $\mathcal{I}_q$ is as in~\eqref{def:iq}. However, if $q < d$ then the anti-diagonals hitting the top-left corner of one of these blocks (the $(1,2)$-block) have already been removed by the requirement that $\ptr{\cY}{H_i} = 0_\cX$ from Proposition~\ref{thm:all}, so we add back in $2$ matrices. Furthermore, it is straightforward to verify that $2\sum_{k \in \mathcal{I}_q}L_{d,k} = d^2 - d$ when $q < d$ and $2\sum_{k \in \mathcal{I}_q}L_{d,k} \leq d^2 - d - 2$ when $q \geq d$, so the result follows.
\end{proof}

\section{Directions for future research}\label{sec:future}

\subsection{Generalization to arbitrary affine spaces of states}\label{sec:gen_subspace}

The questions considered in this work can be viewed as questions about state tomography on certain subsets of states via the Choi--Jamio\l kowski isomorphism.
For example, using process tomography to uniquely identify unitary channels among unitary channels is equivalent to using state tomography to uniquely identify maximally entangled bipartite pure states among maximally entangled bipartite pure states.
In other words, we can think of process tomography as state tomography on the intersection of the set of pure states and the affine space of operators with completely mixed partial trace.

This line of thinking raises a natural question: can we answer uniqueness questions for tomography when we restrict attention to affine spaces other than the space of operators with completely mixed partial trace? We do not have a detailed answer for this question, but we can make some observations.

First, some work in this area has already been done:
it was shown in Ref.\ \cite[Proposition~9]{HTW11} that for any manifold $\cl{P}$ with real dimension $d(\cl{P})$ almost all families of $2d(\cl{P})+1$ observables uniquely determine elements of $\cl{P}$ among elements of $\cl{P}$.

For example, the space of unitary channels acting on $\bbM_d$ is a real manifold of dimension $d^2-1$, so almost any family of $2d^2-1$ interactive observables uniquely identifies unitary channels among unitary channels. Remarkably, this quantity is smaller than the $4d^2 - 2d - 4$ interactive observables that were shown to suffice in item \ref{it:unitary_uda_udp:unitary} of Theorem \ref{thm:unitary_uda_udp}! However, our proof of Theorem~\ref{thm:unitary_uda_udp} is constructive; no explicit set of $2d^2-1$ interactive observables is known to uniquely identify unitary channels among other unitary channels.

Second, the above technique does not tell us anything about how many measurements can be used to uniquely determine elements of $\cl{P}$ (such as the set of unitary channels) among a \emph{different} set (such as the set of \emph{all} quantum channels). To get general results in this direction, we simply use the fact that if $\bQ,\bV \subseteq \bbM_{d^2}$ are two subspaces, then $\dim(\bQ\cap\bV) = \dim(\bQ)+\dim(\bV)-\dim(\bQ+\bV)$.

To illustrate this approach, let $\bV$ be the subspace of operators with at least $2$ positive eigenvalues of dimension $d^4 - 5d^2 + 6$ constructed in Ref.\ \cite{CDJJKSZ12} and let $\bQ=\bQ_\mathrm{all}$ be the subspace \eqref{eq:dim-all} of operators with vanishing partial trace of dimension $d^4 - d^2$. Using the fact that $\dim(\bQ_\mathrm{all}+\bV)\leq d^4$, we see that
\begin{align*}
  \dim(\bQ_\mathrm{all}\cap\bV) &\geq (d^4 - 5d^2 + 6) + (d^4 - d^2) - d^4\\
  &= d^4 - 6d^2 + 6
\end{align*}
It then follows from the reduction of Section \ref{sec:reduction} that there is a set of $(d^4 - d^2) - (d^4 - 6d^2 + 6) = 5d^2 - 6$ interactive observables that uniquely identify unitary channels among all channels, which is slightly worse than the $5d^2 - 3d - 4$ interactive observables that were shown in the present paper to suffice for this task.

\subsection{Process tomography of no-signaling channels}\label{sec:future:ns}

A channel $\Phi:\lin{\cX_1\cX_2}\to\lin{\cY_1\cY_2}$ acting on two $d$-level quantum systems is \emph{no-signaling} if the output of the channel on each system is independent of the input to the channel on the other system.
This condition is nicely characterized by the linear constraints
\[ \ptr{\cY_b}{J(\Phi)}=Q_b\ot I_{\cX_b} \textrm{ for some $Q_b$, for each $b\in\set{1,2}$} \]
on the Choi matrix $J(\Phi)$.
Thus, the reduction of Section \ref{sec:reduction} can be used to bound the number of interactive observables needed to identify channels among no-signaling channels by finding discriminating subspaces $\bV$ of $\bQ_\mathrm{no\textrm{-}sig}$ where
\[ \bQ_\mathrm{no\textrm{-}sig} \defeq \spn\Set{ J(\Phi)-J(\Psi) \mid \Phi,\Psi \textrm{ are no-signaling channels}}. \]

However, the task of constructing large subspaces of high-rank matrices within $\bQ_\mathrm{no\textrm{-}sig}$ is considerably more complicated than that of constructing similar spaces within $\bQ_\mathrm{all}$ or $\bQ_\mathrm{unital}$.
Whereas $\bQ_\mathrm{all},\bQ_\mathrm{unital}$ have only one or two vanishing partial trace constraints, $\bQ_\mathrm{no\textrm{-}sig}$ has several additional partial trace constraints.

Our experience is that each additional constraint adds considerably to the complication of constructing high-rank subspaces based on totally nonsingular matrices.
While this method could in principle yield a significant improvement in the number of interactive observables needed to identify no-signaling channels, the question seems to call out for a different approach.

\subsection{Tomography of multi-round quantum strategies}\label{sec:future:strategies}

A \emph{quantum strategy} is a specification of the actions of one party in a multi-party interaction involving the exchange of multiple rounds of quantum messages among the parties.
Channels arise as a special case of strategies in which only one round of messages is exchanged.
The strategy formalism was introduced in Refs.\ \cite{GutoskiW07,ChiribellaD+08a,ChiribellaD+09a} wherein it was shown that a matrix
$Q\in\lin{\cY_1\cdots\cY_r\cX_1\cdots\cX_r}$ represents an $r$-round strategy for input spaces $\cX_1,\dots,\cX_r$ and output spaces $\cY_1,\dots,\cY_r$ if and only if $Q$ is positive semidefinite and
\[ \ptr{\cY_r\cdots\cY_i}{Q} = Q_i \ot I_{\cX_r\cdots\cX_i} \textrm{ for some $Q_i$, for each $i=1,\dots,r$}. \]
Quantum strategies are therefore intimately related to no-signaling channels in that the above constraints also characterize the Choi matrices of channels of the form
\[ \Phi:\lin{\cX_1\cdots\cX_r}\to\lin{\cY_1\cdots\cY_r} \]
that act on $r$ distinct $d$-level quantum systems and are no-signaling from system $j$ to system $i$ for each $1\leq i<j\leq r$.

An important difference between strategies and no-signaling channels (or any other type of channel) is that channel tomography is achieved by preparing an input state and measuring the output.
By contrast, tomography for $r$-round quantum strategies introduces the need to prepare $r$ input systems and process $r$ output systems in sequence, with future inputs possibly depending on previous outputs.
In other words, the ``observables'' required for strategy tomography are themselves full-blown, multi-round strategies.

At first glance this complication might seem intractable.
In order to perform tomography on a multi-round strategy one must deduce not only how the strategy acts on various input states, but also how it acts when various \emph{channels} are applied to its outputs and then subsequently returned to it as inputs in future rounds.

However, strategies that measure other strategies have appeared previously in the literature under the names \emph{$r$-round measuring co-strategy} \cite{GutoskiW07} and \emph{$r$-tester} \cite{ChiribellaD+09a}.
The discussion of Section \ref{sec:observable} generalizes readily in light of this prior work.
Specifically, one can show that:
\begin{enumerate}

\item An arbitrary $d^{2r}\times d^{2r}$ Hermitian matrix $H$ can be viewed as an ``$r$-round interactive observable'' for a strategy that exchanges a $d$-level quantum system $r$ times in sequence.

\item If $H$ is small enough then one can extract from $H$ a collection of objects---an initial state, a sequence of channels with memory, and a measurement for the final system---that describe the behaviour of the observable operationally.

\item Such an extraction is possible if and only if $H$ lies in the unit ball of the \emph{dual of the strategy $r$-norm} presented in Ref.\ \cite{Gutoski12}.

\end{enumerate}
Thus, as with no-signaling channels in Section \ref{sec:future:ns}, one can employ the reduction of Section \ref{sec:reduction} and search for large discriminating subspaces of $\bQ_r$ where
\[ \bQ_r \defeq \spn\Set{Q-Q' \mid Q,Q' \textrm{ are $r$-round strategies}}. \]

In short: the task of formalizing the notion of an $r$-round interactive observable is not the bottleneck that thwarts attempts to bound the number of observables needed for strategy tomography.
Rather, the bottleneck lies in the same place as with no-signaling channels and other types of channel---in the ability (or lack thereof) to construct large subspaces of high-rank matrices that also satisfy a myriad of partial trace constraints.

\subsection{Experimentally-friendly interactive observables}\label{sec:future:experimental}

One of the drawbacks of our approach to process tomography is that the interactive observables obtained from Theorem \ref{thm:unitary_uda_udp} do not necessarily have a nice form---they might not be Clifford operators, they could be non-local, and so on. Hence, they might be difficult to implement in the laboratory. It would be interesting to consider the number of interactive observables required to reconstruct unitary channels under certain additional ``niceness'' restrictions such as these.

In order to illustrate the type of result that would be desirable we consider the $d = 2, q = 1$ case of Theorem~\ref{thm:unitary_uda_udp}.1, which says that six interactive observables suffice to uniquely determine unitary qubit channels among unitary qubit channels. We now show that we can in fact find six interactive observables for this purpose that are both local and Clifford.

To this end our goal is to find six linearly independent Hermitian local Clifford operators $O_1,\ldots,O_6 \in \bbM_2\ot\bbM_2$ with two vanishing partial traces that are each orthogonal to the three operators~\eqref{eq:d2unitaryeg} presented in the proof of Proposition~\ref{thm:unitary}. The following are six such operators:
\begin{align*}
	O_1 & = X \otimes Z & O_2 & = H \otimes Y & O_3 & = (1-i)S^\dagger HS \otimes SX \\
	O_4 & = Y \otimes Z & O_5 & = ZHZ \otimes X & O_6 & = (1-i)SHS^\dagger \otimes XS,
\end{align*}
where $X,Y,Z$ are the Pauli operators and $H,S$ are the Clifford operators with standard basis representations
\begin{align*}
	X & = \begin{bmatrix}
		0 & 1 \\ 1 & 0
	\end{bmatrix},
	& Y & = \begin{bmatrix}
		0 & -i \\ i & 0
	\end{bmatrix},
	& Z & = \begin{bmatrix}
		1 & 0 \\ 0 & -1
	\end{bmatrix},
	& H & = \frac{1}{\sqrt{2}}\begin{bmatrix}
		1 & 1 \\ 1 & -1
	\end{bmatrix},
	& S & = \begin{bmatrix}
		1 & 0 \\ 0 & i
	\end{bmatrix}.
\end{align*}
This construction of $O_1,\ldots,O_6$ is \emph{ad hoc}, and it is not clear whether it can be generalized to the other cases of Theorem~\ref{thm:unitary_uda_udp}.

As described in Section~\ref{sec:observable:decompose}, each of these interactive observables specifies a measurement on a maximally entangled input state, which is not nice.
It is not difficult to see that these interactive observables could be measured using only experimentally-friendly product state inputs.
However, this product state implementation essentially consists of measuring all twelve degrees of freedom and then throwing away half of the resulting information in order to compile the results into the six observables required for tomography.
Obviously, nothing is gained by such an exercise.

\section*{Acknowledgements}

The authors thank Steven T.~Flammia, Marcus da Silva, John Watrous, and Bei Zeng for helpful conversations.
Research at the Perimeter Institute is supported by the Government of Canada through Industry Canada and by the Province of Ontario through the Ministry of Research and Innovation.
GG also acknowledges support from CryptoWorks21.
NJ is supported by the Natural Sciences and Engineering Research Council of Canada.

\bibliographystyle{alpha}
\newcommand{\etalchar}[1]{$^{#1}$}

\end{document}